\theoremstyle{plain}
\newtheorem{thm}{Theorem}[section]
\newtheorem{lemma}[thm]{Lemma}
\newtheorem{prop}[thm]{Proposition}
\newtheorem{cor}[thm]{Corollary}
\newtheorem{conj}[thm]{Conjecture}
\theoremstyle{definition}
\newtheorem{defn}[thm]{Definition}
\newtheorem{ex}[thm]{Example}
\theoremstyle{remark}
\newtheorem{rmk}{Remark}
\newcommand{\B}[1]{\mathbb #1}
\newcommand{\C}[1]{\mathcal #1}
\newcommand{\F}[1]{\mathfrak #1}
\newcommand{\ind}{\mbox{$\perp \kern-5.5pt \perp$}}
\DeclareMathOperator{\im}{Im}
\DeclareMathOperator{\rank}{rank}
\newcommand{\Span}[1]{\langle #1 \rangle}
\DeclareMathOperator{\conv}{conv}
\DeclareMathOperator{\Aut}{Aut}
\DeclareMathOperator{\JC}{JC}
\DeclareMathOperator{\CFN}{CFN}
\DeclareMathOperator{\KtP}{K3P}
\DeclareMathOperator{\KwP}{K2P}
\newcommand{\inner}[1]{\langle #1 \rangle}
\title{Dimensions of Group-based Phylogenetic Mixtures}
\date{}
\author[Ba\~{n}os]{Hector Ba\~{n}os}
\address[H.\ Ba\~{n}os]{Department of Mathematics and Statistics, University of Alaska Fairbanks, Fairbanks, AK 99775} 
\email{hdbanoscervantes@alaska.edu}
\author[Bushek]{Nathaniel Bushek}
\address[N. Bushek]{Department of Mathematics and Statistics, University of Alaska Anchorage, Anchorage, AK 99508} 
\email{nbushek@alaska.edu}
\author[Davidson]{Ruth Davidson}
\address[R. Davidson]{Department of Mathematics \& Department of Plant Biology, University of Illinois Urbana-Champaign, Urbana, IL 61801} 
\email{redavid2@illinois.edu}
\author[Gross]{Elizabeth Gross}
\address[E. Gross]{Department of Mathematics, San Jose State University, San Jose, CA 95192} 
\email{elizabeth.gross@sjsu.edu}
\author[Harris]{Pamela E. Harris}
\address[P. E. Harris]{Department of Mathematics and Statistics, Williams College, Williamstown, MA 01267} 
\email{pamela.e.harris@williams.edu}
\author[Krone]{Robert Krone}
\address[R. Krone]{Department of Mathematics and Statistics, Queens University, Kingston, ON K7L 3N6, Canada} 
\email{rk71@queensu.ca}
\author[Long]{Colby Long}
\address[C. Long]{Department of Mathematics, North Carolina State University, Raleigh, NC 27695} 
\email{celong@ncsu.edu}
\author[Stewart]{Allen Stewart}
\address[A. Stewart]{Department of Mathematics, Seattle University, Seattle, WA 98122} 
\email{stewaral@seattleu.edu}
\author[Walker]{Robert Walker}
\address[R. Walker]{Department of Mathematics, University of Michigan, Ann Arbor, MI 48109} 
\email{robmarsw@umich.edu}
\begin{document}

\maketitle
\begin{abstract}  In this paper we study group-based Markov models of evolution and their mixtures. In the algebreo-geometric setting, group-based phylogenetic tree models correspond to toric varieties, while their mixtures correspond to secant and join varieties. Determining properties of these secant and join varieties can aid both in model selection and establishing parameter identifiability.  Here we explore the first natural geometric property of these varieties: their dimension. The expected projective dimension of the join variety of a set of varieties is one more than the sum of their dimensions.  A join variety that realizes the expected dimension is nondefective.  Nondefectiveness is not only interesting from a geometric point-of-view, but has been used to establish combinatorial identifiability for several classes of phylogenetic mixture models.  In this paper, we focus on group-based models where the equivalence classes of identified parameters are orbits of a subgroup of the automorphism group of the group defining the model.  In particular, we show that, for these group-based models, the variety corresponding to the mixture of $r$ trees with $n$ leaves is nondefective when $n \geq 2r+5$.  We also give improved bounds for claw trees and give computational evidence that 2-tree and 3-tree mixtures are nondefective for small~$n$.

\end{abstract}

\section{Introduction}

A phylogenetic tree is a graphical representation of the common evolutionary history of a group of \emph{taxa}, where commonly studied taxon types include species, gene samples from microbial communities, and individuals within a single population. 
Modern gene-sequencing technology has led to a significant increase in the amount of protein, RNA, and DNA sequence data available for phylogenetic and phylogenomic inference \cite{bertelli2013rapid, gawad2016single,metzker2010sequencing}, meriting the involvement of many disciplines in the development of novel techniques for phylogenetic inference. This interaction has led to several subfields in phylogenetics, including phylogenetic algebraic geometry, which studies phylogenetic models from an algebreo-geometric framework. In this paper, we approach mixtures of group-based Markov models from this algebreo-gemetric perspective, studying the dimensions of their corresponding varieties.

The approach of studying phylogenetic and phylogenomic inference using algebraic geometry  was originally introduced to the biological community  via the concept of \emph{invariants} of tree-based Markov models \cite{cavender1987invariants,Lake1987, lake1987rate}. Since then, several invariant-based phylogeny inference approaches have been developed, and these approaches have been compared to maximum likelihood, neighbor-joining, and maximum parsimony with promising results, especially in the case when both long and short branches are present \cite{casanellas2007performance, Chifman2014,FSC2016, Rusinko2012}.

At the core of phylogenetic algebraic geometry is the fact that an algebraic variety can be associated to the set of distributions comprising a phylogenetic model \cite{ DSS2009,eriksson2005phylogenetic}, indeed, this variety is the Zariski closure of the model. Due to this correspondence, many of the properties of a phylogenetic model
can be explored with the tools of
computational algebraic geometry.
In fact, algebraic methods have proven 
useful not only for phylogenetic inference, but also for establishing identifiability \cite{allman2006identifiability, rhodes2012identifiability}, a necessary requirement for meaningful statistical inference.

Algebraic varieties can be associated not only to tree-based Markov models, but also to their mixtures. Tree-based Markov models assume that mutations occur with the same probabilities at every site along the gene sequences being studied.  Since this is an approximation to what occurs in the natural course of genetic mutation, the data is sometimes better explained by a \emph{mixture model} \cite{pagel2004phylogenetic}, in which the behavior at different sites is described by different parameter values for the same tree or even by different trees entirely.  
The variety of a mixture model is the 
\emph{join} \cite{casanellas2012algebraic} of the individual 
phylogenetic tree model varieties in 
the mixture. In general, the identifiability problem of the combinatorial parameters asks whether the set of component trees of a mixture model can be recovered from generic data. The question of identifiability has been answered for some tree-based Markov models for 2 and 3-tree mixtures \cite{Allman, long2017identifiability}. These results have relied on being able to construct polynomials that vanish on all distributions of a given mixture model as well as the dimension of the corresponding join variety.  This manuscript focuses on this second key tool in proving identifiability, the dimension of the variety.

There are several different models of sequence evolution for tree-based Markov models, each which results in a different geometry. Our study explores the  class of tree-based Markov models called 
\emph{group-based models} \cite{michalek2011geometry},
in which the transition matrices of the 
model are assumed to exhibit certain symmetries. An important observation of Evans and Speed \cite{Evans1993} is the varieties associated to group-based phylogenetic tree models are not only algebraic varieties, but toric varieties. This allows us to apply tools of computational and combinatorial algebraic geometry to their investigation.
The class of group-based models includes many 
commonly used models of sequence evolution, including the Jukes-Cantor (JC) and the Cavender-Farris-Neyman (CFN) models \cite{daskalakis2011evolutionary,hendy1994discrete,jukes1969evolution, neyman1971molecular}.

When considering a model geometrically, we first need to state its ambient space. We consider the model varieties and their joins as living in projective space. The projective dimension of a join of two projective varieties is at most the sum of their dimensions plus one.  This upper bound is typically realized, as long as it does not exceed the dimension of the ambient space, and so is referred to as the \emph{expected dimension}. 
If a join variety has the expected dimension it is said to be \emph{nondefective} and is otherwise \emph{defective}.
Proving the nondefectiveness of the join varieties associated to mixture models is the key tool in establishing identifiability results for phylogenetic mixtures.
Specifically, in
\cite{Allman, long2017identifiability}, the strategy for proving identifiability relied on showing the join varieties associated to 2 and 3-tree mixtures for the CFN, JC, and Kimura 2-parameter (K2P) models has the expected dimension for trees with few leaves.
 In this paper, we extend these results in several directions. We prove the following theorem, which not only shows that nondefectiveness holds more generally for
joins of group-based models, but also holds when there are more trees in the mixture provided the trees have a sufficient number of leaves.

\begin{thm}\label{thm: main}
Let $\mathcal{T}_1,\ldots, \mathcal{T}_r$ be 
phylogenetic $[n]$-trees with $n \geq 2r+5$, $G$ be an abelian group, and $B\subset \Aut(G)$.  Then $V^{(G,B)}_{\C T_1} * \cdots * V^{(G,B)}_{\C T_r}$ has the expected dimension.
\end{thm}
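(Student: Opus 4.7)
The plan is to apply Terracini's Lemma to reduce the expected-dimension statement for the join $V^{(G,B)}_{\C T_1} * \cdots * V^{(G,B)}_{\C T_r}$ to the existence of smooth points $p_i \in V^{(G,B)}_{\C T_i}$ whose affine tangent spaces sum to a subspace of the expected dimension $r + \sum_i \dim V^{(G,B)}_{\C T_i}$. The key structural input is the Evans--Speed observation: after the discrete Fourier transform, each $V^{(G,B)}_{\C T_i}$ is a toric variety with an explicit monomial parametrization indexed by the characters of $G$ respecting the $B$-orbit identification. Consequently, each tangent space admits a combinatorial description as the column span of a Jacobian that decomposes edge-by-edge over $\C T_i$, which is what will make the linear-algebraic bookkeeping tractable.

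The argument would proceed by induction, with the bound $n \geq 2r+5$ providing the slack needed for the reduction step. The base case $r = 1$ is immediate, since an irreducible variety always realizes the expected dimension in a one-fold join. For the inductive step the natural move is to peel off one tree and two leaves at a time: given $r$ trees on $n$ leaves, select a pair of leaves $a, b \in [n]$ and form reduced trees $\C T_i'$ on $[n] \setminus \{a, b\}$ by contracting or detaching the substructures containing $a, b$. After suitable choice of base points, the tangent space $T_{p_i} V^{(G,B)}_{\C T_i}$ splits as $W_i \oplus U_i$, where $W_i$ comes from the tangent space of the reduced variety $V^{(G,B)}_{\C T_i'}$ (accessible via the inductive hypothesis, since $n - 2 \geq 2(r-1) + 5$) and $U_i$ records the new tangent directions attached to the edges at $a$ or $b$. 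By induction one obtains that $\sum_i W_i$ has the expected dimension for the reduced configuration.

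The main obstacle is the second half of the inductive step: verifying that $\sum_i U_i$ supplies exactly the expected additional dimension modulo $\sum_i W_i$. Since the character data tied to $a$ and $b$ is structurally identical across all $r$ trees, naive coincidences among the $U_i$ are a genuine danger; ruling them out requires the freedom to vary the base points $p_i$ independently within the remaining $n - 2$ leaves of each tree, together with a careful combinatorial analysis of the $B$-orbit action on the characters attached to $a$ and $b$ to forbid accidental linear dependences. This combinatorial rank calculation is the technical heart of the proof, and is precisely where the bound $n \geq 2r + 5$ should enter sharply: each additional tree costs two extra leaves of buffer, and the inductive reduction balances exactly when the hypothesis is applied at the pair $(r-1,\, n-2)$.
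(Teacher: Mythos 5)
Your proposal is a plan rather than a proof, and the plan has a genuine gap at exactly the point you identify as its ``technical heart.'' First, the decomposition $T_{p_i}V^{(G,B)}_{\C T_i} = W_i \oplus U_i$ is asserted but never constructed: deleting two leaves changes the ambient space from $\B C^{k^{n-1}}$ to $\B C^{k^{n-3}}$, so $W_i$ cannot literally be the tangent space of $V^{(G,B)}_{\C T_i'}$; one would need a specific lift compatible with all $r$ trees simultaneously, and producing such a lift with the required independence properties is essentially the whole problem. Second, the induction bookkeeping does not close. You invoke the hypothesis at $(r-1, n-2)$, which means the reduced configuration has $r-1$ trees, yet you write $\sum_i W_i$ over all $r$ trees; the entire tangent space of the peeled-off tree, plus every $U_i$, must then be shown independent of $\sum_{i \le r-1} W_i$, and no mechanism for this is given. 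Third, and decisively, the rank computation ruling out ``accidental linear dependences'' among the $U_i$ is explicitly deferred (``requires \dots a careful combinatorial analysis''); since all $r$ varieties share the same character data at the leaves $a,b$, these dependences are the real danger, and nothing in the proposal forbids them. The heuristic that the bound $n \ge 2r+5$ ``balances exactly'' under the reduction $(r,n)\mapsto(r-1,n-2)$ is suggestive but is not an argument.

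For comparison, the paper avoids tangent spaces entirely. It uses Draisma's tropical version of Terracini (Theorem~\ref{tropjoin}), which reduces nondefectiveness of a join of toric varieties to exhibiting functionals $v=(v_1,\dots,v_r)$ whose induced partition assigns to each $\C T_i$ a subset $D_i(v)$ of its exponent vectors with full-dimensional affine span. Concretely, the consistent leaf labelings are sliced by the number of non-identity leaf labels, one slice (roughly, labelings with $2i$ or $2i+1$ non-identity labels) is allotted to each tree via $r-1$ parallel hyperplanes, and Lemmas~\ref{lem: CFN slice} and~\ref{lem: dim D general g} show each slice already spans. There is no induction on $r$ or on $n$; the bound $n \ge 2r+5$ arises because $r$ disjoint slices of even size up to $n-5$ must fit among $n$ leaves. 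The extensions to non-binary trees and nontrivial $B$ are then handled by linear projections of the lattice configurations (Lemmas~\ref{lem: nonbinary} and~\ref{lem: nonbinary eqclass}), not by modifying a tangent-space argument. If you wish to pursue your route, you would need to actually carry out the rank analysis you postpone, and at that point you would be re-proving, in less convenient coordinates, what Draisma's lemma packages combinatorially.
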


In addition to its applications to phylogenetics, Theorem \ref{thm: main} is an interesting geometric result as it adds to the growing body of knowledge on the structure of joins and secant varieties in algebraic geometry. Defectiveness of joins of toric varieties has been intensively studied in  some very specific cases, such as secants of Veronese varieties \cite{AH1995} and of Segre-Veronese varieties \cite{abo2012new,abo2013dimensions}, but little is known about  the general case or even cases outside of these examples.  

Our primary tool for proving Theorem \ref{thm: main} is a tropical version of Terracini's Lemma, a classical tool for computing the dimensions of joins.
The tropical version, introduced by Draisma, gives lower bounds for dimensions of joins of toric varieties by rephrasing questions about the dimensions of toric varieties as questions about the convex geometry of lattice points \cite{Draisma2008}.
   
This paper is organized as follows. In 
Section \ref{sec: Preliminaries}, we
introduce the group-based models and explain the Fourier transformation of \cite{Evans1993} that makes the parameterization of these models monomial and the resulting ideals toric.
We also describe toric ideals more generally and explain how Draisma's Lemma 
can be used to
establish nondefectiveness. 
In Section
\ref{sec: general group-based models}, we exploit the combinatorics of trees and use Draisma's Lemma (Theorem \ref{tropjoin}) to prove a version of the main theorem (Theorem \ref{thm: main CFN}) that holds for a class of models called the \emph{general group-based models} and for binary trees. In Section
\ref{sec: non-binary trees and other group-based models}, we prove Theorem \ref{thm: main}, which does not require that the trees in the mixture  be binary and allows for group-based models with parameter identifications. In Section \ref{sec: special cases}, we give improved bounds for some special cases and state computational results for few number of leaves. We conclude with a discussion on the applications of this work to parameter identifiability for mixture models.

\section{Preliminaries}
\label{sec: Preliminaries}

In this section, we introduce phylogenetic models and phylogenetic mixture models, and we
describe how to associate an algebraic variety to each.
We then introduce group-based phylogenetic models and a change of coordinates called the discrete Fourier transformation, in which the parameterization of a group-based phylogenetic model is monomial.
Finally, we describe the connection between group-based models 
and toric ideals and lay the groundwork for the application of Theorem \ref{tropjoin}  \cite[Corollary 2.3]{Draisma2008}. This is the primary tool that we will use to establish our main results in the subsequent sections.

\subsection{Phylogenetic Models}
\label{sec: phylogenetic models}
We follow the conventions of \cite{Evans1993} and \cite{Sturmfels2005}.
In a {\em phylogenetic model} we specify a
rooted tree $\C T'$ with $n-1$ leaves representing the 
evolutionary history of a collection of $n -1$ taxa. 
The root of $\C T'$, denoted by $\rho'$, represents the most recent common ancestor of this set of taxa.
We assume $\C T'$ has no degree-2 vertices 
other than 
the root and label the leaves by 
the set $\{1, \ldots, n-1\}$.
We then fix a finite alphabet $G = \{g_1,\ldots,g_k\}$, which in phylogenetic applications is usually chosen to be $\{A,G,C,T\}$ to represent the four DNA bases. For any choice of parameters in the phylogenetic model, we obtain a probability distribution on the set of all 
$(n - 1)$-tuples of $G$ representing the possible states at the leaves 
of $\C T'$.

To construct a distribution from a choice of parameters, let $\C V(\C T'), \C E(\C T'), \C L(\C T')$ denote the vertex, edge, and leaf vertex sets of $\C T'$ respectively. Each vertex $v \in \C V(\C T')$ has associated to it a random variable $X_v$ with state space $G$.  The distribution of states
at the root node is given by a function $\pi: G \to \B R$ with $\pi(g) = P(X_{\rho'} = g)$ for each $g \in G$. 
To each directed edge $e = (u,v)$ of $\C T'$, we associate a 
$k \times k$ stochastic \emph{transition matrix} 
$A^{(e)}$ given by 
$A^{(e)}_{ij} = P(X_v = j | X_u = i )$.  
A joint state of the random variables 
$\{X_v \ : v \in \mathcal V(\C T')  \}$ can be described by a $G$-labeling $\phi:\C V(\C T') \to G$ of the vertices.
In a phylogenetic model of DNA substitution, the labeling indicates
that at the DNA site being modeled,
the DNA base in the taxon at $v$ is $\phi(v)$. 
The probability of observing a particular labeling is then given by
\begin{equation}
    P((X_{v}= \phi(v))_{v \in \C V(\mathcal{T}')})= \pi(\phi(\rho')) \prod_{e=(u,v) \in \C E(\C T')} A^{e}_{\phi(u),\phi(v)}. 
\end{equation}

However, only the states of the random variables at the non-root leaf vertices (which represent extant species) are observable. To compute the probability of observing a particular state at the leaves of $\mathcal{T}'$, we must marginalize over all possible states of the internal vertices.  
For a $G$-labeling of the leaves $\psi:\C L(\C T') \to G$, let $p_\psi$ be the marginal over all labelings of 
$\C V(\C T')$ that extend $\psi$,
\begin{equation}
 p_{\psi} := \sum_{\phi \text{ extending } \psi}\left[\pi(\phi(\rho')) \prod_{e=(u,v) \in \C E(\C T ')} A^{e}_{\phi(u),\phi(v)}\right]. 
 \end{equation}
These $p_{\psi}$ are called {\em probability coordinates} and the entries of the transition matrices
are called the {\em stochastic parameters} of the model.  
For each choice of stochastic parameters, we obtain a probability distribution on the 
$(n-1)$-tuples of elements of $G$. Thus, the $p_\psi$ are the coordinate functions
of a
polynomial map 
$h_{\C T'}: \Theta_{\C T '} \to 
\Delta^{(k^{n-1}) -1} \subseteq
\B R^{k^{n-1}}$
from the space of stochastic parameters for $\C T'$ to the probability simplex.
We call the image of $h_{\C T'}$ the \emph{model associated to $\C T'$}, denoted $\C M_{\C T'}$.

Ignoring the stochastic restrictions on 
the parameter space, we may regard 
$h_{\C T'}$ as a complex polynomial map.
Then 
the Zariski closure $ \overline{\C M_{\C T'}} = 
V_{\C T'} \subseteq \mathbb C^{k^{n-1}}$ is 
an algebraic variety and
the set of polynomials that vanish
on this variety is the ideal
$$
I_{\C T'}
\subseteq
\mathbb{C}[p_\psi : \psi \in G^{(n-1)}].
$$
The elements of this ideal are called
{\em phylogenetic invariants} and these invariants have found many important
applications in phylogenetics \cite{casanellas2007performance, Chifman2014,FSC2016, Rusinko2012}.

\subsubsection{Phylogenetic Mixture Models}

The single tree models described above may fail to adequately describe the evolutionary history of a group of taxa for a variety of reasons. For example, due to horizontal gene transfer, hybridization, and varying rates of mutation across sites, the evolution of different sites may best be modeled by phylogenetic models with different tree parameters or by different choices of the stochastic parameters from the same phylogenetic model. Mixture models account for these phenomena by weighting the distributions from multiple models according to the proportion of sites that evolved according to each.
Thus, an $r$-tree mixture model is determined by specifying $r$ tree parameters, stochastic parameters for each tree in the model, and a mixing parameter $\omega \in \Delta^{r - 1}$ that determines the weight of each tree in the mixture. 
Thus, for an $r$-tree mixture model, we obtain a map
$h_{\C T'_1, \ldots, T'_r }: \Theta_{\C T_1 '} \times \ldots \times \Theta_{\C T_r '} \times \Delta^{r - 1} 
\to 
\Delta^{(k^{n-1}) -1} $
given by 
$$h_{\C T'_1, \ldots, T'_r }(\theta_1, \ldots, \theta_r, \omega) = \omega_1h_{\C T'_1}(\theta_1) + \ldots + \omega_rh_{\C T'_r}(\theta_r).$$ The mixture model is then denoted by $\C M_{\C T_1'} * \ldots *  M_{\C T_r'}.$

For a phylogenetic mixture model, the Zariski closure
of $ \overline{\C M_{\C T_1'} *\ldots *\C M_{\C T_r'} }$ is the variety 
$V_{\C T_1' }*\ldots*V_{\C T_r' } \subseteq \mathbb C^{k^{n-1}}$ which is the 
\emph{join} of the varieties associated to each tree in the mixture. 
We formally define this term and explore the connection between mixture models
and join varieties in Section \ref{sec: Toric varieties}.
Our goal in this paper, will be to prove results for the dimensions of the join
varieties associated to a particular class of phylogenetic models, called the
group-based models.

\subsection{Group-based Phylogenetic Models}
\label{sec: phylogenetic models}

In a {\em group-based model}, the alphabet $G$ is given the additional structure of an abelian group.  
\begin{defn}\label{def:phylogeneticmodel}
A phylogenetic model is group-based if
for each edge 
$e \in \mathcal{E}(\mathcal{T}')$
there exists a transition function 
$f^{(e)}:G \to \B R$ such that for all 
$1 \leq i, j \leq k$,
$A^{(e)}_{ij} = f^{(e)}(g_i - g_j)$.
\end{defn}

\noindent
For example, the {\em Cavender-Farris-Neyman model} and the {\em Kimura 3-parameter model} are both group-based models.  In the  Cavender-Farris-Neyman model (CFN), $G = \{0,1\}$ is given the group structure $\B Z/2\B Z$.  Each transition matrix is specified by 2 parameters, one for each element of $G$,
 \[ A^{(e)} = \begin{pmatrix}
     \alpha & \beta  \\
     \beta  & \alpha
    \end{pmatrix}. \]
In the  Kimura 3-parameter model (K3P), $G = \{A,G,C,T\}$ is given the group structure 
of $(\B Z/2\B Z) \times (\B Z/2\B Z)$ with $A$ defined to be the identity element.  The transition matrices of this model have the form
 \[ A^{(e)} = \begin{pmatrix}
     \alpha & \beta  & \gamma & \delta \\
     \beta  & \alpha & \delta & \gamma \\
     \gamma & \delta & \alpha & \beta  \\
     \delta & \gamma & \beta  & \alpha
    \end{pmatrix}. \]

For the group-based models, it will be convenient to modify the tree parameter by adding a leaf to the root node of $\C T'$. Call the new tree with $n$ leaves $\C T$ and denote the new leaf vertex by $\rho$.  Orient the edges
of $\mathcal{T}$ away from the leaf labeled $\rho$. Define $X_\rho$ to be the random variable with state space $G$ where $P(X_\rho(g)) = 1$ if $g$ is the identity and zero otherwise. Let the transition function
on the edge $(\rho,\rho')$ be defined by $f^{(\rho,\rho')}(g) = \pi_g$ for all $g \in G$.  Therefore,
the distribution of $X_{\rho'}$ remains $\pi$, but we have removed the special distinction of the 
root distribution so that now all parameters are encoded by the set of functions $\{f^{(e)}\}_{e \in \C E(\C T)}$. 
We now think of the 
tree as being ``rooted" at $\rho$, though in fact the 
tree parameter of the model is now an unrooted tree with no degree two vertices. 
We now rewrite the map given in (2), so that for a $G$-labeling of the leaves $L(\C T)\setminus \{\rho\}$, 
 \[ p_{\psi} := \sum_{\phi \text{ extending } \psi}\left[ \prod_{(u,v) \in \C E(\C T)} f^{((u,v))}(\phi(u) - \phi(v))\right]. \]
Notice that the stochastic parameters of a group-based model are the values of the transition functions $f^{(e)}$.
Now for the unrooted tree $\C T$ we have a map $h_{\C T}: \Theta_{\C T } \to 
\Delta^{(k^{n-1}) -1} \subseteq
\B R^{k^{n-1}}$ and an associated algebraic variety $V_{\C T}$.

If we do not place any additional
restrictions on the functions 
$\{f^{(e)}\}_{e \in \C E(\C T)}$, other
than that they give a probability distribution, 
then
the model associated to $G$ is called the \emph{general group-based model} associated to $G$. 
For example, the CFN model
described above is the general group-based model associated to $\mathbb{Z}/2\mathbb{Z}$ and the K3P model is the general group-based model
associated to $\mathbb{Z}/2\mathbb{Z} \times \mathbb{Z}/2\mathbb{Z}$. 
However, in some models, the parameters associated to some group elements may be identified. For example, both the {\em Kimura 2-parameter model} (K2P) and the {\em Jukes-Cantor model} (JC) can be obtained from K3P by identifying parameters. 
In the K2P model, the parameters for $C$ and $T$ are identified, while in the JC model the parameters for $G,C,$ and $T$ are identified.  Thus the transition matrices for the K2P and JC models have the respective forms
 \[ A^{(e)} = \begin{pmatrix}
     \alpha & \beta  & \gamma & \gamma \\
     \beta  & \alpha & \gamma & \gamma \\
     \gamma & \gamma & \alpha & \beta  \\
     \gamma & \gamma & \beta  & \alpha
    \end{pmatrix}, \quad\quad 
    A^{(e)} = \begin{pmatrix}
     \alpha & \beta  & \beta  & \beta  \\
     \beta  & \alpha & \beta  & \beta  \\
     \beta  & \beta  & \alpha & \beta  \\
     \beta  & \beta  & \beta  & \alpha
    \end{pmatrix}. \]

The identification of parameters can be specified by an equivalence relation on $G$.  We require the equivalence classes of this relation to be the orbits of some subgroup of $\Aut(G)$.  In particular this means that the identity is always in its own class.
Therefore a group-based phylogenetic tree model is specified by the data of a finite group $G$, an 
$n$-leaf directed tree $\C T$, and a subgroup $B$ of $\Aut(G)$.  
Therefore, we will now use the notation
$\C M_{\C T}^{(G,B)}$ and $V_{\C T}^{(G,B)}$ for the model and variety of the group-based model $(G,B)$ on $\C T$. 
The specific pairs $M = (G,B)$ described above are 
\begin{itemize}
 \item CFN $= (\B Z/2\B Z, \{1\})$,
 \item JC  $= (\B Z/2\B Z \times \B Z/2\B Z , \F S_3)$,
 \item K2P $= (\B Z/2\B Z \times \B Z/2\B Z , \F S_2)$,
 \item K3P $= (\B Z/2\B Z \times \B Z/2\B Z , \{1\})$,
\end{itemize}
where $\Aut(\B Z/2\B Z \times \B Z/2\B Z )$ is identified with the permutation group 
$\F S_3$. 
Observe also that the 
general group-based models are then precisely those models for
which $B = \{1\}$. 
Our strategy for proving the main result will first be to prove some results for general group-based models in Section \ref{sec: general group-based models} and then to show in 
Section \ref{sec: non-binary trees and other group-based models}
that they still hold for models in which we identify certain parameters.

\subsection{The Fourier Transformation}

In this section, we describe the parameterization of a group-based 
phylogenetic model
in the Fourier coordinates.
In these coordinates, the parameterization
is seen to be monomial and consequently
the varieties associated to these models are toric. 
Consider a group-based model
specified by $(G,B)$
on the $n$-leaf tree $\mathcal{T}$.
Suppose that $\C T$ has $m$ edges and that
there are $l+1$ orbits of $B$.  
Then the model has a total of $m(l+1)$ parameters, but only $ml$ are independent because $\sum_{g \in G} f^{(e)}(g) = 1$ for each $e \in \C E(\C T)$.
The observation of Evans and Speed \cite{Evans1993} is that there is a linear change of coordinates in which $h_{\C T}$ becomes a monomial map.  This implies that, in the new coordinates, the image of $h_{\C T}$ is a toric variety, which aids in the search of phylogenetic invariants.

Let $\hat{G}$ denote the character group of $G$, consisting of all group homomorphisms $\chi:G \to \B C^\times$.  Note that since $G$ is abelian $\hat{G}$ is isomorphic to $G$ itself.  Define $\inner{\chi,g} := \chi(g)$.  For the transition function $f^{(e)}:G \to \B C$, the Fourier transform $\hat{f}^{(e)}:\hat{G} \to \B C$ is given by
 \[ \hat{f}^{(e)}(\chi) = \sum_{g \in G} \inner{\chi,g} f^{(e)}(g). \]
 
\noindent Similarly we can define a Fourier transform of the probability coordinates.  Let $\xi'$ denote a $\hat{G}$-labeling of $\C L(\C T)\setminus \{ \rho\}$.  
For each such $\xi'$ let
 \[ \hat{p}(\xi') := \sum_{\psi} \prod_{v \in \C L(\C T)\setminus \rho} \inner{\xi'(v),\psi(v)} p_\psi. \]
 
 \noindent As we see in the following theorem, the transformed probability coordinates can be written in terms of the transformed transition functions.
 
 \begin{thm}\cite{Evans1993}
 \label{thm: EvansSpeed}
Let $\C L(e)$ denote the set of leaves on the arrow-side of edge $e$ (i.e. its descendants).  Then
  \[ \hat{p}(\xi') = \prod_{e \in \C E(\C T)} \hat{f}^{(e)}\bigg(\prod_{v \in \C L(e)} \xi'(v)\bigg). \]
\end{thm}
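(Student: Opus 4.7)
The plan is to prove the identity by direct computation. On the left-hand side, I would expand $p_\psi$ using its marginal definition so that the outer sum (over leaf labelings $\psi$) and the inner sum (over extensions) collapse into a single sum over full $G$-labelings $\phi$ of $\C V(\C T)$. On the right-hand side, I would use the definition of $\hat{f}^{(e)}$ together with the multiplicativity of characters to express each factor as a sum over an edge-indexed group element $g_e$, and then distribute the outer product over edges to obtain a single sum over tuples $(g_e) \in G^{\C E(\C T)}$. The two expressions can then be matched via a natural bijection between labelings and edge-indexed tuples.

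Concretely, the left-hand side becomes
\[
\hat{p}(\xi') = \sum_\phi \prod_{\ell \in \C L(\C T) \setminus \{\rho\}} \inner{\xi'(\ell), \phi(\ell)} \prod_{e = (u,v) \in \C E(\C T)} f^{(e)}(\phi(u) - \phi(v)),
\]
while the right-hand side expands to $\sum_{(g_e)} \prod_e f^{(e)}(g_e) \prod_e \prod_{v \in \C L(e)} \inner{\xi'(v), g_e}$. The decisive step is the bijection sending a labeling $\phi$ (with $\phi(\rho)$ fixed at the identity) to the edge-tuple $g_e := \phi(u) - \phi(v)$. Under this bijection, the transition-function products match termwise. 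For the character products, note that a leaf $\ell$ lies in $\C L(e)$ exactly when $e$ is on the unique path from $\rho$ to $\ell$; regrouping the product by leaf and using $\inner{\chi, g_1 + g_2} = \inner{\chi, g_1}\inner{\chi, g_2}$ rewrites it as $\prod_\ell \inner{\xi'(\ell), \sum_{e} g_e}$, where the inner sum runs over edges on the path from $\rho$ to $\ell$ and telescopes to $\phi(\rho) - \phi(\ell)$.

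The combinatorial heart of the proof is thus recognizing this bijection and observing that the path-telescoping turns a product of edgewise characters into a product of leafwise characters. The main obstacle is not conceptual but bookkeeping: edge orientations, the convention for the character pairing $\inner{\chi,g} = \chi(g)$, and the treatment of $\phi(\rho)$ as identity must all be tracked consistently so that the telescoped sum matches the pairing on the left-hand side (any sign discrepancy is absorbed by the involution $\xi' \mapsto (\xi')^{-1}$ on $\hat{G}$-labelings, which reindexes the Fourier sum without changing its value). Once these conventions are aligned, the two sides coincide term by term and the identity follows.
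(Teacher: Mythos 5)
The paper offers no proof of this theorem---it is quoted directly from Evans and Speed \cite{Evans1993}---so there is no internal argument to compare against; your proposal is the standard direct computation and is essentially correct. Collapsing the double sum into a single sum over full labelings $\phi$ with $\phi(\rho)$ pinned at the identity, expanding the right-hand side into a sum over edge tuples $(g_e)$, and matching terms via the bijection $\phi \mapsto (g_e)$ (well-defined and invertible precisely because $\C T$ is a tree and $\phi(\rho)$ is fixed) together with the path-telescoping identity is exactly how this result is established. The one point to tighten is the sign: with edges oriented away from $\rho$ and $g_e = \phi(u)-\phi(v)$, the telescoped sum along the path from $\rho$ to $\ell$ is $\phi(\rho)-\phi(\ell) = -\phi(\ell)$, so your computation literally produces $\hat p\bigl((\xi')^{-1}\bigr)$ on the left rather than $\hat p(\xi')$, and the remark that the involution $\xi' \mapsto (\xi')^{-1}$ ``reindexes the sum without changing its value'' does not by itself establish the pointwise identity as stated. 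This is purely a convention issue---it disappears if one pairs with $\inner{\xi'(v),\psi(v)}^{-1}$ in the definition of $\hat p$, or flips the sign convention in Definition \ref{def:phylogeneticmodel}, and for the exponent-two groups used throughout the paper every element is its own inverse so the discrepancy vanishes identically---but for general abelian $G$ you should either fix the convention explicitly or state the conclusion with $\xi'(v)^{-1}$ inside $\hat f^{(e)}$.
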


We call the new coordinates, $\hat{p}(\xi')$ the 
{\em Fourier coordinates} and the values of the transformed transition functions, $\hat{f}^{(e)}(\chi)$, the {\em Fourier parameters}. 
For the rest of this paper, the notation 
$V^{M}_{\C T}$ denotes
the model variety in the space of Fourier coordinates.  
In the case that $M = (G,\{1\})$ 
we may simply write $V^G_{\C T}$.

We would like to remove the asymmetry in this description caused by the root $\rho$.  Each $\hat{G}$-labeling $\xi'$ of $\C L(\C T)\setminus \{ \rho\}$ can be uniquely extended to a labeling $\xi$ of $\C L(\C T)$ by assigning
 \[ \xi(\rho) = 
  \prod_{v \in \C L(\C T)\setminus \{ \rho \} } \xi'(v)^{-1}. 
 \]
Such labelings are called {\em consistent leaf labelings} of $\C T$. Notice that the consistent leaf labelings are exactly the $k^{n-1}$ leaf labelings for which the product of all $n$ leaf labels is equal to the identity.

We label the Fourier coordinates by consistent leaf labelings as
 $q_{\xi} := \hat{p}(\xi').$
For each consistent leaf labeling $\xi$, there is an associated {\em consistent edge labeling} $\tilde{\xi}:\C E(\C T) \to \hat{G}$ given by
 \[ \tilde{\xi}(e) := \prod_{v \in \C L(e)} \xi(v) \]
where $\mathcal L(e)$ is defined as in Theorem \ref{thm: EvansSpeed}. Consistent edge labelings are characterized by the property that for each internal vertex $v$,
 \[ \prod_{e \in v_{\mathrm{in}}} \tilde{\xi}(e) \prod_{e \in v_{\mathrm{out}}} \tilde{\xi}(e)^{-1} = 1 \]
where $v_{\mathrm{in}}$ and $v_{\mathrm{out}}$ denote the incident incoming and outgoing edges to $v$ respectively.  Finally this lets us describe the monomial map $\hat{h}_{\C T}$ from the space of Fourier parameters to the Fourier coordinates by the equation
 \[ q_\xi = \prod_{e \in \C E(\C T)} \hat{f}^{(e)}(\tilde{\xi}(e)). \]
 
 \begin{rmk}
  This description of $V_{\C T}^M$ does not depend on the orientation of $\C T$.  To see this, note that for any fixed consistent leaf labeling $\xi$, reversing the orientation of an edge $e$ inverts $\tilde{\xi}(e)$, but does not change the labels of the other edges.  
For any choice of parameters, replacing each $\hat{f}^{(e)}(\chi)$ with $\hat{f}^{(e)}(\chi^{-1})$ produces the same point in Fourier probability space.
Thus, we can remove any special distinction of the root leaf $\rho$ by choosing an
arbitrary orientation of the edges.
Moreover, if $G$ has characteristic 2, such as $\B Z/2\B Z$ or $(\B Z/2\B Z) \times (\B Z/2\B Z)$, then the map $\hat{h}$ is itself invariant under changes of orientation of $\C T$, and so orientation can be ignored entirely.
\end{rmk}

It is common in phylogenetic applications to assume that the root distribution is uniform. When this is the case, the construction described above differs slightly. In particular, we now have
$\hat{f}^{(\rho,\rho')}(\chi) = 1/|G|$ 
if 
$\chi$ is the identity and zero otherwise
\cite{Sturmfels2005}. 
Thus, the only coordinates we need
to consider are ones corresponding to 
consistent leaf labelings that satisfy
 \[\prod_{v \in \C L(e)\setminus \{ \rho\} } \xi(v) = 1. \]
These are exactly the consistent leaf labelings for the model on $\C S$ where $\C S$ is the unrooted tree obtained by removing
the root of $\C T$ and suppressing the resulting degree two vertex---to define the model on $\C S$, we can regard any of the $n-1$ leaves as the root leaf. Consequently, if we assume the root distribution is uniform, 
many of the coordinates for the model variety of $\mathcal{T}$ are zero, and we can regard this variety as the model
variety $V_{\C T'}^M$ where the root distribution is arbitrary.  Therefore, when we assume the root distribution is uniform, we interpret the model variety for an $n$-leaf unrooted tree as corresponding to a statistical model for $n$ taxa. Thus, all the results that we prove in this paper for group-based model varieties still apply when the root distribution is assumed to be uniform.

When $B$ is trivial (so all elements of $G$ receive distinct parameters), $V^{(G,\{1\})}_{\C T}$ is the image of $\hat{h}_{\C T}:\B C^{mk} \to \B C^{k^{n-1}}$, the parameterization map in the Fourier coordinates.
For $B$ non-trivial, the identification of stochastic parameters induces an identification of Fourier parameters.
An automorphism $\alpha$ of $G$ induces an automorphism $\alpha^*$ of $\hat{G}$ by defining
 \[ \inner{\alpha^*(\chi), g} = \inner{\chi, \alpha(g)}. \]
Therefore $B \subseteq \Aut(G)$ has a corresponding subgroup $\hat{B} \subseteq \Aut(\hat{G})$.  If we insist that for any $\alpha \in B$, $f^{(e)}(\alpha g) = f^{(e)}(g)$, then it can be shown that
 \[ \hat{f}^{(e)}(\alpha^* \chi) = \hat{f}^{(e)}(\chi). \]
 That is, the orbits of $B$ are mapped into
 orbits of $\hat B$ under the Fourier transform. Therefore, if two stochastic
 parameters are assumed to be equal, they are mapped
 to two identical Fourier parameters.

 As a result, there are only $l+1$ distinct Fourier parameters for each edge and 
$V_{\C T}^{(G,B)}$ is the image
of the monomial map
$\hat{h}_{\C T}: \B C^{m(l+1)} \to \B C^{k^{n-1}}$.
The fact that 
$ \hat{f}^{(e)}(\alpha^* \chi) = \hat{f}^{(e)}(\chi)$ is the reason that we insist that the equivalence classes of probability parameters are orbits of $B$.
Otherwise, it is possible to have identified probability parameters
mapping to distinct Fourier parameters
\cite[Appendix A]{Michalek2011}.
In such a case, the Fourier parameterization is monomial, but the Fourier parameters are not algebraically independent and so the resulting ideals are not toric.

\begin{rmk}\label{rmk: dual group}
Since $G$ is a finite abelian group, $G$ and $\hat{G}$ are isomorphic and their elements can be identified. 
None of our results depend on the particular identification used and so from here on we will not carefully distinguish between the two. We will label the Fourier coordinates by consistent leaf labelings using elements of $G$ and also label Fourier parameters by elements of $G$.
\end{rmk}

\begin{ex}

This example demonstrates the parameterization of one of the Fourier coordinates for the K3P model on the  4-leaf unrooted tree 
$\C T$ pictured below.
The tree $\C T$ is constructed by attaching a leaf to the root of a 3-leaf rooted tree.
For this model,  
$G = 
\mathbb{Z}/2\mathbb{Z} \times 
\mathbb{Z}/2\mathbb{Z}$
and $B = \{1\}$. As noted in Remark 1, because $G = 
\mathbb{Z}/2\mathbb{Z} \times 
\mathbb{Z}/2\mathbb{Z}$, the parameterization will remain unchanged if we reorient edges in $\C T$.

\begin{center}
    \includegraphics[width=4cm]{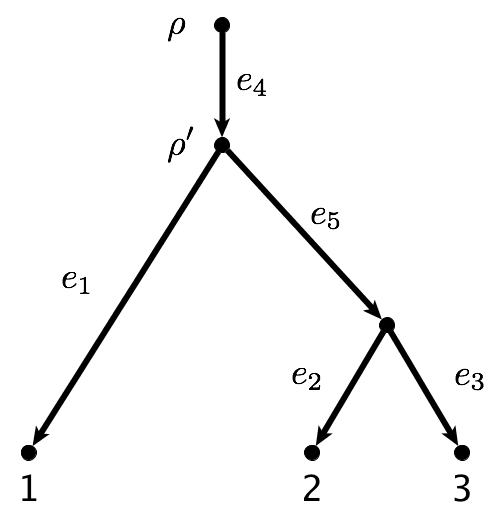}
\end{center}

The leaf
labeling
$\hat \xi
=((0,1), 
(1,0),
(1,1),
(0,0))$,
where the last
element in the index is the label
of the ``root" leaf, is consistent since 
$(0,1) + 
(1,0) +
(1,1) +
(0,0) = 
(0,0)$.
Using $a^i_{g}$
for the Fourier parameter associated to the group element $g$ on the edge $e_i$, 

$$q_{
((0,1), 
(1,0),
(0,0),
(1,1))} = 
a^1_{(0,1)}
a^2_{(1,0)}
a^3_{(1,1)}
a^4_{(0,0)}
a^5_{(1,0) + (1,1)}
=
a^1_{(0,1)}
a^2_{(1,0)}
a^3_{(1,1)}
a^4_{(0,0)}
a^5_{(0,1)}.$$

\end{ex}

The reason for introducing the Fourier parameterization is so that we may work with the toric
ideals $V_{\C T}^{(G,B)}$. Because of the linearity of the transform, the parameterization
map for an $r$-tree mixture in the Fourier coordinates is given by
$$\hat h_{\C T'_1, \ldots, T'_r }(\hat \theta_1, \ldots, \hat \theta_r, \omega) = \omega_1\hat h_{\C T'_1}(\hat \theta_1) + \ldots + \omega_r\hat h_{\C T'_r}(\hat \theta_r).$$ 
Thus, in the Fourier coordinates, the variety for a mixture model is the join of the varieties for each
tree in the mixture in the Fourier coordinates. Therefore, to determine the dimension of a phylogenetic
mixture model, we can utilize the results we describe in the next section about the dimensions of joins
of toric varieties.

\subsection{Toric varieties}
\label{sec: Toric varieties}
Because $V_{\C T}^M$ is the image of a monomial map $\hat{h}_{\C T}:\B C^{m(l+1)} \to \B C^{k^{n-1}}$, it is a complex toric variety.  Since the map is homogeneous we can consider $V_{\C T}^M$ as a projective variety in $\B P^{k^{n-1}-1}$.  The map $\hat{h}$ can be described by a $m(l+1) \times k^{n-1}$ matrix $A$. The columns of which are the exponent vectors of the monomials parameterizing each Fourier coordinate.  Let $\C A \subseteq \B R^{m(l+1)}$ denote the set of column vectors of $A$ and let $P \subseteq \B R^{m(l+1)}$ be the convex hull of $\C A$.  The geometry of $P$ and $V_{\C T}^M$ are closely tied in many ways. We will be primarily interested in dimension,
 \[ \dim V_{\C T}^M = \dim_{\B R} P = \rank A - 1 \]
where $\dim V_{\C T}^M$ denotes the projective dimension \cite{sturmfels1996grobner}.

Label the equivalence classes of $G$ induced by $B$ by the integers from $0,\ldots,l$, with 0 labeling the equivalence class containing only the identity.
Fixing $B$, let $\delta:G \to \B R^{l+1}$ be the map sending each group element to $e_i$ where $i$ is the label of its equivalence class and $e_0,\ldots, e_l$ are the $l+1$ standard basis vectors.  The image of $\delta$ is the set of vertices of a standard $l$-simplex $\Delta_l := \conv(e_0,\ldots,e_l)$, and its affine span is the hyperplane $K$ defined by $x_0 + \cdots + x_l = 1$ where $x_i$ denotes the $i$th coordinate.

Each of the vectors in $A$ comes from a consistent edge-labeling of $\C T$, $(g_1,\ldots,g_m) \in G^m$, by the map
 \[ \delta^m: G^m \to (\B R^{l+1})^m\]
 \[ (g_1,\ldots,g_m) \mapsto (\delta(g_1),\ldots,\delta(g_m)). \]
The image of $\delta^m$ is the set of lattice points corresponding to all possible $G$ edge-labelings of the graph (not just consistent ones), so $\C A \subseteq \im \delta^m$.  The convex hull of $\im \delta^m$ is the polytope $\Delta_l^m$, e.g. for the CFN model, $\conv(\im \delta^m) = \Delta_1^m$ is an $m$-dimensional cube. The polytope $\Delta_l^m$ has dimension $lm$ and its affine span is $K^m$.  The polytope $P^{M}_{\C T}$ associated to $V_{\C T}^M$ is contained in  $\Delta_l^m$, and thus,
 \[ \dim V_{\C T}^M \leq lm. \]

\begin{rmk}
The map $\hat{h}_{\C T}$ is multi-homogeneous, which is reflected in the fact that $P^{M}_{\C T}$ is contained in the codimension-$m$ space $K^m \subseteq \B R^{(l+1)m}$.  It is sometimes convenient to consider the dehomogenized map by projecting away coordinate $x_0$ for each edge. The projected polytope $P^{M}_{\C T} \subseteq \B R^{lm}$ differs only by a linear change of coordinates from the above, but now affinely spans the ambient space in the case that it realizes the lower bound and has dimension $lm$. 
\end{rmk}

Given varieties $W_1,\ldots, W_r \subseteq \B P^{N-1}$, we denote their join by $W_1 * \ldots *W_r$, which is defined as the Zariski closure of the set of linear spaces defined by one point from each variety.  The variety $W_1 *\ldots* W_r$ can be considered as the closure of the image of the map 
 \[ W_1 \times\ldots \times W_r \times \B P^{r-1} \to \B P^{N-1}, \]
 \[ (p_1,\ldots,p_r,[c_1:\ldots: c_r]) \mapsto c_1p_1 + \ldots + c_rp_r. \]
From this map and the discussion at the end of Section \ref{sec: general group-based models}, it is evident that the variety associated to the mixture model $M$ on trees $\mathcal{T}_1,\ldots,\mathcal{T}_r$
 is the join variety
$\mathcal{V}_{\C T_1}^{M}*\ldots* 
\mathcal{V}_{\C T_r}^{M}$.
Thus, to establish our main result for 
phylogenetic mixture models we will utilize
techniques for bounds on the dimension
of join varieties.

Just by counting parameters, we have the following upper bound on the dimension of $W_1 * \ldots * W_r$,
 \[ \dim(W_1 * \ldots * W_r) \leq \dim W_1 + \ldots + \dim W_r + (r -1). \]
Another upper bound on $\dim(W_1 * \ldots * W_r)$ is the dimension of the ambient space, $N-1$.  The {\em expected dimension} of $W_1 * \ldots*W_r$ is
 \[ \min\{\dim W_1 + \ldots + \dim W_r + (r -1), N-1\}. \]
If the dimension is less than the expected dimension, $W_1 * \ldots * W_r$ is said to be {\em defective}.

For $W \subseteq \B P^{N-1}$, the join $W * W$ is called the {\em secant} of $W$ (or more specifically the {\em second secant} of $W$), also written $\sigma(W)$ or $\sigma_2(W)$.  For any integer $r \geq 1$, the {\em $r$th secant} of $W$ is
 \[ \sigma_r(W) := \underbrace{W * \cdots * W}_{r \text{ times}}. \]
The expected dimension of $\sigma_r(W)$ is
 \[ \min\{r(\dim W + 1) - 1, N\}. \]
 
To prove the main theorem, we will rely heavily on a tool developed by Draisma. This tool allows us to determine lower bounds on the dimensions of joins and secants of toric varieties using tropical geometry \cite{Draisma2008} (Theorem \ref{tropjoin}).  Let $W_1,\ldots,W_r$ be projective toric varieties in $\B P^{N-1}$, and let $A_i$ be the $m_i \times N$ matrix associated to $W_i$ for $i = 1,\ldots,r$.  Let $v = (v_1,\ldots,v_r)$ be a sequence of linear functionals, with each $v_i:\B R^{m_i} \to \B R$, considered as a row vector.  
Then $v_i A_i$ is a row vector in $(\B R^N)^*$.
Let $W_i(v) \subseteq [N]$ denote the set of positions $j$ such that the $j$th entry of $v_i A_i$ is strictly less than the $j$th entry of $v_l A_l$ for all $l \neq i$.  Thus $W_1(v),\ldots,W_r(v)$ are disjoint, and for a generic choice of $v$ form a partition of $[N]$.

Now let $D_i(v) \subseteq \B R^{m_i}$ be the set of column vectors of $A_i$ in the positions given by $W_i(v)$.  Let $\rank D_i(v)$ denote the dimension of the span of $D_i(v)$.  This is one more than $\dim \conv(D_i(v))$, the dimension of the affine span of $D_i(v)$.

\begin{thm}[Corollary 2.3 of \cite{Draisma2008}]\label{tropjoin}
For any choice of $v \in \prod_{i=1}^r (\B R^{m_i})^*$,
 \[ \dim(W_1 * \cdots * W_r) \geq \rank D_1(v) + \cdots + \rank D_r(v)-1. \]
\end{thm}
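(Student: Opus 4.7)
The plan is to combine Terracini's Lemma with a tropical (one-parameter) degeneration controlled by the functionals $v_i$. Terracini's Lemma states that for generic points $p_i \in W_i$, the affine tangent cone of $W_1 * \cdots * W_r$ at $p_1 + \cdots + p_r$ equals the Minkowski sum $T_{p_1}^{\mathrm{aff}} W_1 + \cdots + T_{p_r}^{\mathrm{aff}} W_r$. Thus it suffices to exhibit one (possibly limiting) tuple $(p_1,\ldots,p_r)$ at which these affine tangent spaces jointly span at least $\sum_i \rank D_i(v)$ dimensions; semicontinuity then forces the same lower bound generically, and subtracting one for projectivization yields the stated inequality.

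To produce such a tuple I would specialize the parameters of $W_i$ to the one-parameter family $\theta_i(t) := (t^{(v_i)_j})_j$ and then rescale the $k$th ambient coordinate by $t^{-\mu_k}$, where $\mu_k := \min_l (v_l A_l)_k$. A direct computation with the Jacobian of the monomial parameterization $\phi_i(\theta)_k = \theta^{a_{i,k}}$ shows that the $k$th entry of a general tangent vector in $T_{p_i(t)}^{\mathrm{aff}} W_i$ can be written as $t^{(v_i A_i)_k} \langle a_{i,k}, \eta \rangle$ for some $\eta \in \mathbb{C}^{m_i}$, where $a_{i,k}$ denotes the $k$th column of $A_i$. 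After the rescaling this entry becomes $t^{(v_i A_i)_k - \mu_k} \langle a_{i,k}, \eta \rangle$, which tends to $\langle a_{i,k}, \eta \rangle$ when $k \in W_i(v)$ (the case where $i$ strictly attains the minimum) and to $0$ in every other coordinate.

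Consequently, the limit of $T_{p_i(t)}^{\mathrm{aff}} W_i$ is supported on the coordinates indexed by $W_i(v)$ and projects onto the column span of the submatrix $D_i(v)$, so has dimension exactly $\rank D_i(v)$. Because the sets $W_i(v)$ are pairwise disjoint by construction, these $r$ limiting tangent spaces lie in complementary coordinate subspaces, and the dimension of their Minkowski sum is therefore $\sum_i \rank D_i(v)$. Lower semicontinuity of matrix rank applied to the family of tangent-sum matrices parameterized by $t$ gives at least this dimension at a generic $t$, and Terracini's Lemma then delivers the bound $\dim(W_1 * \cdots * W_r) \geq \sum_i \rank D_i(v) - 1$.

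The main delicate point is to treat the coordinate rescaling as a bona fide flat degeneration of the affine cone of each $W_i$, namely the toric degeneration associated to the weight $v$; once that is in place, the coordinate-disjoint structure of the limit makes it transparent that no rank can be lost as $t \to 0$. Everything else is bookkeeping with Jacobians and submatrix ranks.
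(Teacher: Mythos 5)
Your argument is correct, but note that the paper itself offers no proof of this statement: it is quoted verbatim as Corollary~2.3 of Draisma's paper, and what you have written is essentially a faithful reconstruction of Draisma's own argument there (Terracini's Lemma, the substitution $\theta_j = t^{(v_i)_j}$ making the $k$th Jacobian entry $t^{(v_iA_i)_k}\langle a_{i,k},\eta\rangle$, rescaling by $t^{-\mu_k}$, and lower semicontinuity of the rank of the combined Jacobian as $t\to 0$). One small imprecision: when the minimum $\mu_k$ is attained by more than one index, the rescaled entry for each minimizing $i$ tends to $\langle a_{i,k},\eta\rangle$ rather than to $0$, so the limiting tangent spaces need not lie in literally complementary coordinate subspaces; however, since the blocks $W_i(v)$ (strict minimizers) are disjoint and the $i$th limiting space surjects onto the span of $D_i(v)$ under projection to the block $W_i(v)$, the sum still has dimension at least $\sum_i \rank D_i(v)$, so the bound survives for arbitrary, not just generic, $v$.
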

\noindent
It follows that if there exists $v$ such that $\rank D_i(v) = m_i$ for all $i = 1,\ldots,r$, then $W_1 * \cdots * W_r$ has the expected dimension.

In the case of a secant, the picture is a bit simpler.  Let the toric variety $W \subseteq \B P^{N-1}$ have $m \times N$ matrix $A$ and let $\C A \subseteq \B R^m$ denote the set of column vectors of $A$.  Let $v = (v_1,\ldots,v_r)$ be a sequence of linear functionals on $\B R^m$.  
The sequence $v$ divides $\B R^m$ into open convex regions $R_1(v),\ldots,R_r(v)$ defined by
 \[ R_i(v) := \{ p \in \B R^m \mid v_i(p) < v_l(p) \text{ for all } l \neq i \} \]
for $i = 1,\ldots,r$.
\begin{cor}\label{tropsec}
For any choice of $v \in \prod_{i=1}^r (\B R^m)^*$,
 \[ \dim \sigma_r(W) \geq \rank(\C A \cap R_1(v)) + \cdots + \rank(\C A \cap R_r(v)) -1. \]
\end{cor}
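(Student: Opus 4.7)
The plan is to derive this corollary as an essentially immediate specialization of Theorem \ref{tropjoin} to the case where all of the toric varieties in the join are equal. Setting $W_1 = \cdots = W_r = W$ gives $\sigma_r(W)$ on the left side, and the matrices $A_i$ all coincide with $A$ (so each $m_i$ equals the common value $m$). In particular, each $v_i$ in Draisma's theorem is already a linear functional on $\B R^m$, matching the hypothesis of the corollary. So the setup of Theorem \ref{tropjoin} specializes directly and no new choice of data is needed.

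The only thing to verify is that, under this specialization, the sets $D_i(v)$ appearing in Theorem \ref{tropjoin} coincide with the sets $\C A \cap R_i(v)$ that appear in the statement of the corollary. I would do this by writing out the definition of $W_i(v)$ in coordinates: if $a_1, \ldots, a_N \in \B R^m$ denote the columns of $A$, then the $j$th entry of the row vector $v_i A$ is precisely $v_i(a_j)$. Hence the condition that the $j$th entry of $v_i A$ be strictly less than that of every $v_l A$ with $l \neq i$ is exactly the condition $v_i(a_j) < v_l(a_j)$ for all $l \neq i$, which is the defining condition for $a_j \in R_i(v)$. Thus $W_i(v) = \{\, j : a_j \in R_i(v)\,\}$, and consequently $D_i(v)$, the set of columns of $A$ indexed by $W_i(v)$, equals $\C A \cap R_i(v)$. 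Plugging this identification into the conclusion of Theorem \ref{tropjoin} yields the stated inequality.

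Because the corollary is really just a notational repackaging, there is no serious obstacle to overcome; the one place a reader could stumble is in keeping track of the indexing. The cleanest way to handle this is to make the identification $(v_i A)_j = v_i(a_j)$ explicit, so that the transition from the partition $W_1(v), \ldots, W_r(v)$ of $[N]$ used in Theorem \ref{tropjoin} to the partition of $\C A$ induced by the regions $R_1(v), \ldots, R_r(v)$ is transparent. Once that translation is in place, the bound on $\dim \sigma_r(W)$ is obtained simply by reading off Theorem \ref{tropjoin}.
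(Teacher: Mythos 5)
Your proposal is correct and is exactly the argument the paper intends: the corollary is stated as an immediate specialization of Theorem \ref{tropjoin} to $W_1=\cdots=W_r=W$, with the identification $(v_iA)_j=v_i(a_j)$ showing that $D_i(v)=\C A\cap R_i(v)$ (up to columns lying on region boundaries, which contribute to neither side). No further comment is needed.
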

In the particular case of $r = 2$, the regions $R_1(v)$ and $R_2(v)$ can be described as the open half-spaces on either side of a hyperplane $\C H$.  In this case we denote the regions by $\C H^+$ and $\C H^-$.
For larger $r$, it may also be useful to partition the space by $r-1$ hyperplanes (although these are not the only sort of partitions allowed).

\begin{prop}\label{prop:hyperplanes}
Let $\C H_1,\ldots,\C H_{r-1}$ be hyperplanes through $P$, with no two intersecting in $P$.  Let $P_1,\ldots,P_r$ be the connected components of $P \setminus(\C H_1 \cup \cdots \cup \C H_{r-1})$.
Then there is a sequence of functionals $v = (v_1,\ldots,v_r)$ such that $P_i = R_i(v) \cap P$.
\end{prop}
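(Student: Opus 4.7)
The plan is to build each $v_i$ as a signed linear combination of defining forms for the hyperplanes, with signs dictated by which side of each $\C H_k$ contains $P_i$. First, pick affine forms $a_1, \ldots, a_{r-1}$ with $\C H_k = \{a_k = 0\}$. Because no two $\C H_k$ meet inside $P$, each connected component $P_i$ avoids every hyperplane, so on $P_i$ the sign of each $a_k$ is constant; write this sign as $\sigma_i(k) \in \{-1, +1\}$.

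The first key step is to verify that the assignment $i \mapsto \sigma_i \in \{\pm 1\}^{r-1}$ is injective. If $\sigma_i = \sigma_j$ for some $i \neq j$, then for every $x$ in the convex set $\conv(P_i \cup P_j) \subseteq P$ each $a_k(x)$ is a convex combination of two values of a common nonzero sign, hence nonzero; so this convex hull lies in $P \setminus \bigcup_k \C H_k$ and is connected, contradicting the fact that $P_i$ and $P_j$ are distinct connected components. This is where the non-intersection hypothesis is genuinely used, and it is the main obstacle in the argument.

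Next, define
\[
v_i(x) := -\sum_{k=1}^{r-1} \sigma_i(k)\, a_k(x),
\]
so that
\[
v_j(x) - v_i(x) \;=\; 2 \sum_{k \,:\, \sigma_i(k) \neq \sigma_j(k)} \sigma_i(k)\, a_k(x).
\]
For $x$ in the interior of $P_i$, every summand on the right equals $|a_k(x)| > 0$, and there is at least one summand since $i \neq j$, giving $v_i(x) < v_j(x)$. Hence $P_i \subseteq R_i(v) \cap P$.

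For the reverse inclusion, each $x \in P$ either lies in some component $P_j$ or on exactly one hyperplane $\C H_k$, since no two hyperplanes meet in $P$. If $x \in P_j$ with $j \neq i$, the previous display with $i$ and $j$ swapped yields $v_j(x) < v_i(x)$, ruling out $x \in R_i(v)$. If $x$ lies on $\C H_k$ between the components $P_l$ and $P_m$, then $a_k(x) = 0$ combined with the fact that $\sigma_l$ and $\sigma_m$ differ only in the $k$-th coordinate forces $v_l(x) = v_m(x)$, so no single $v_i$ can be strictly less than all others simultaneously. This completes the reverse inclusion and hence the proof. Conceptually, the $v_i$ are the $r$ linear pieces of the concave piecewise linear function $-\sum_k |a_k(x)|$ on $P$, whose domains of linearity are precisely $P_1, \ldots, P_r$.
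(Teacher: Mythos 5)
Your construction is genuinely different from the paper's. The paper first uses the non-crossing hypothesis to order the hyperplanes into a nested chain $(\C H_1^+ \cap P) \supseteq \cdots \supseteq (\C H_{r-1}^+\cap P)$, so that the components become consecutive slices $\C H_{i-1}^+\cap\C H_i^-\cap P$, and then takes the telescoping functionals $v_i=-\ell_1-\cdots-\ell_{i-1}$. You instead attach to each component its sign vector $\sigma_i$ and take $v_i$ to be the linearization of the concave function $-\sum_k|a_k|$ on $P_i$. Your injectivity argument via $\conv(P_i\cup P_j)$ is correct and, contrary to your own remark, does not use the non-crossing hypothesis at all; consequently your proof of the inclusion $P_i\subseteq R_i(v)\cap P$ works verbatim for an arbitrary hyperplane arrangement with one functional per connected component, which is strictly more general than the proposition (and is the only direction actually used downstream in Draisma's lemma). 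Also, restricting to the \emph{interior} of $P_i$ is unnecessary: every point of $P_i$ already satisfies $a_k(x)\neq 0$ for all $k$.

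There is, however, a gap in your reverse inclusion in the case $x\in\C H_k\cap P$. From $v_l(x)=v_m(x)$ you conclude that ``no single $v_i$ can be strictly less than all others,'' but equality of two of the values does not preclude a third index from being the unique strict minimizer. The fix is the observation already implicit in your closing remark: for every $i$ one has $v_i(x)=-\sum_{k'}\sigma_i(k')a_{k'}(x)\geq -\sum_{k'}|a_{k'}(x)|$, with equality exactly when $\sigma_i(k')=\mathrm{sign}(a_{k'}(x))$ for every $k'$ with $a_{k'}(x)\neq 0$. The two components $P_l,P_m$ adjacent to $x$ across $\C H_k$ --- which exist because each $\C H_k$ must separate $P$ for there to be $r$ components, so both open sides of $\C H_k$ meet $P$ arbitrarily near $x$ --- both achieve this common lower bound, so the minimum of $v_1(x),\ldots,v_r(x)$ is attained at least twice and $x\notin R_i(v)$ for any $i$. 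With that sentence added your proof is complete; note that the paper's own proof stops after the forward inclusion.
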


\begin{proof}
 Because the hyperplanes do not intersect in $P$, for each $i \neq j$, $\C H_j \cap P$ is contained in $\C H_i^+$ or $\C H_i^-$.  Therefore by reindexing, and choosing plus and minus labels appropriately, we have
 \[ (\C H_1^+ \cap P) \supseteq (\C H_2^+ \cap P) \supseteq \cdots \supseteq (\C H_{r-1}^+ \cap P). \]
 The connected components of $P \setminus(\C H_1 \cup \cdots \cup \C H_{r-1})$ are then $P_1 = \C H_1^- \cap P$, $P_i = \C H_{i-1}^+ \cap \C H_{i}^- \cap P$ for $1 < i < r$, and $P_r = \C H_{r-1}^+ \cap P$.
 For each $\C H_i$ choose a functional $\ell_i$ which vanishes on $\C H_i$ and is positive on $\C H_i^+$.  Then let $v_i = -\ell_1 - \cdots - \ell_{i-1}$ for $i = 1,\ldots,r$.  For any $p \in P_i$, $\ell_j(p) > 0$ for $j < i$ and $\ell_j(p) < 0$ for $j \geq i$.  Therefore $v_i(p)$ is the unique minimum among $v_1(p),\ldots,v_r(p)$, so $p \in R_i(v)$.
\end{proof}

\section{Dimension of joins of group-based models}
\label{sec: general group-based models}

In this section we prove two intermediary theorems on our way to proving Theorem \ref{thm: main}.
Together, these two theorems establish the main result for all groups so long as the trees $\C T_1,\ldots,\C T_r$ are binary and $B$ is trivial.
We require two theorems, 
since slightly different arguments are needed when 
$G = (\mathbb{Z}/2\mathbb{Z})$.   This group-based model is particularly relevant to phylogenetic applications as it is exactly the CFN model discussed in the introduction.  We handle that case first, and then the case $|G| > 2$.  In the next section we will generalize this result for arbitrary trees $\C T_1,\ldots,\C T_r$ and for $B$ an arbitrary subgroup of $\Aut(G)$.

\subsection{Joins for the CFN model}

We will prove the following theorem using Theorem \ref{tropjoin} (Draisma's Lemma). In this section, we will use 
the notation $V^{\B Z/2\B Z}_{\C T}$ in place of $V^{(\B Z/2\B Z,\{1\})}_{\C T}$.

\begin{thm}\label{thm: main CFN}
Let $\mathcal{T}_1,\ldots, \mathcal{T}_r$ be 
binary phylogenetic $[n]$-trees with $n \geq 2r+5$.  Then $V^{\B Z/2\B Z}_{\C T_1} * \cdots * V^{\B Z/2\B Z}_{\C T_r}$ has the expected projective dimension, $r(2n-3)+ r -1$.
\end{thm}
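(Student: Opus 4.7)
The plan is to apply Draisma's Lemma (Theorem \ref{tropjoin}) to the toric varieties $V^{\B Z/2\B Z}_{\C T_i}$. A binary $n$-leaf tree has $m_i = 2n-3$ edges and the CFN model has $l = 1$, so the projective dimension of each $V^{\B Z/2\B Z}_{\C T_i}$ is at most $2n-3$, making $r(2n-3)+r-1$ the expected dimension of the join. By Theorem \ref{tropjoin}, it suffices to produce a tuple $v = (v_1,\ldots,v_r)$ of linear functionals such that $\rank D_i(v) = 2n-2$ for every $i = 1,\ldots,r$.

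Because $G = \B Z/2\B Z$ has characteristic $2$, I translate this rank condition into a purely combinatorial problem. Consistent leaf labelings correspond bijectively to even-size subsets $S \subseteq [n]$ of leaves, and the induced consistent edge labeling $\tilde\xi^S_i$ on $\C T_i$ marks an edge $e$ with $1$ exactly when $S$ meets one side of $e$ in an odd number of leaves. After dehomogenization, the columns of $A_{\C T_i}$ are the corresponding $0/1$-indicator vectors in $\{0,1\}^{2n-3}$, so Draisma's condition becomes: for each $i$, find a collection $\C S_i$ of even subsets of $[n]$ all realized as the uniquely-minimizing positions of the same functional $v_i$, whose induced edge labelings on $\C T_i$ span a $(2n-3)$-dimensional affine space.

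To implement this, I will use the hypothesis $n \geq 2r+5$ to reserve $r$ pairwise-disjoint distinguished pairs of leaves $\{a_i,b_i\} \subset [n]$, one per tree, leaving at least five free leaves. Each $v_i$ will be built additively over the edges of $\C T_i$ as a sum of per-edge costs designed so that $S$ minimizes $v_i$ exactly when $\{a_i,b_i\} \subseteq S$ and $S \cap \{a_j,b_j\} = \emptyset$ for every $j \neq i$. Concretely, the cost on edge $e$ will depend on $\tilde\xi^S_i(e)$ and will reward the values forced by $\{a_i,b_i\}$ while penalizing those forced by any other pair $\{a_j,b_j\}$. This yields a tropical partition in which each tree claims a large family of ``$i$-admissible'' subsets.

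The main obstacle is the concluding combinatorial rank computation: for each $i$ I must exhibit $2n-2$ of these $i$-admissible subsets whose induced edge labelings on $\C T_i$ are affinely independent in $\B R^{2n-3}$. My approach is to construct an explicit affine basis using the subsets $\{a_i,b_i\}$, $\{a_i,b_i,c,c'\}$ as $\{c,c'\}$ ranges over pairs of free leaves, and a few mild variants, exploiting the surplus of at least five free leaves to ensure that for every edge of $\C T_i$ some choice of free-leaf pair places the associated path across that edge. Once this basis is verified uniformly for arbitrary binary trees, Theorem \ref{tropjoin} yields $\dim(V^{\B Z/2\B Z}_{\C T_1} * \cdots * V^{\B Z/2\B Z}_{\C T_r}) \geq r(2n-3)+r-1$, matching the parameter-counting upper bound and establishing nondefectiveness.
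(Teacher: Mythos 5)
Your overall framework---Draisma's Lemma, the identification of consistent labelings with even subsets $S\subseteq[n]$, and the target $\rank D_i(v)=2n-2$ per tree---is correct and matches the paper's strategy. But your partition scheme has a fatal structural flaw. If $W_i(v)$ consists exactly of the subsets $S$ with $\{a_i,b_i\}\subseteq S$ and $S\cap\{a_j,b_j\}=\emptyset$ for all $j\neq i$, then every vector in $D_i(v)$ has the coordinate $x^{\ell}_1$ equal to $1$ for the pendant edges $\ell$ at $a_i$ and $b_i$, and equal to $0$ for the pendant edges at all $2(r-1)$ leaves $a_j,b_j$ with $j\neq i$ (the induced edge label at a pendant edge is just the indicator of membership of that leaf in $S$). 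Hence $D_i(v)$ lies in an affine subspace of codimension $2r$ inside the $(2n-3)$-dimensional affine span of $\C A_i$, so $\rank D_i(v)\leq 2n-2-2r$, and no choice of basis vectors from this family --- $\{a_i,b_i\}$, $\{a_i,b_i,c,c'\}$, or any other admissible variant --- can reach $2n-2$. Draisma's Lemma then certifies only $\dim \geq r(2n-2-2r)-1$, which falls short of the expected dimension by $2r^2$. The problem is intrinsic to any partition that pins down the membership status of individual leaves: each pinned leaf kills one dimension of the affine span, and you pin $2r$ of them.

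The paper avoids this by partitioning according to a single scalar statistic, the total number $c=\sum_{\ell\in\C L(\C T)}x^\ell_1$ of non-identity leaf labels: tree $i$ is assigned the slice $\C S_{2i}$, which imposes only \emph{one} linear constraint, and Lemma \ref{lem: CFN slice} shows that for even $2\leq c\leq n-5$ the slice attains the maximal possible rank $2n-3$. The one missing dimension is then recovered by perturbing the separating hyperplanes (the term $\tfrac{2}{4i-1}\sum_{j\in L_i}x_1^j$) so that each $D_i(v)$ ``borrows'' a single vector $p_{i-1}$ from the adjacent slice, lying off the hyperplane $\C S_{2i}$. If you want to salvage your construction, you would need to enlarge $W_i(v)$ well beyond the admissible family --- for instance, a functional rewarding $|S\cap\{a_i,b_i\}|$ relative to the other pairs captures many sets meeting the $\{a_j,b_j\}$ in exactly one element, which breaks the constant-coordinate degeneracy --- but at that point you are essentially redoing the paper's slice-and-borrow argument, and you would still need an analogue of Lemma \ref{lem: CFN slice} to compute the rank of each region.
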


Each toric variety $V^{\B Z/2\B Z}_{\C T_i}$ is parametrized by a monomial map with set of exponent vectors $\C A_i \in \B R^{2(2n-3)}$ which correspond to the constistent labeleings of $\C T_i$.  The affine span of $\C A_i$ has dimension $2n-3$ thus the projective dimension of $V^{\B Z/2\B Z}_{\C T_i}$ is $2n-3$.

Per Draisma's Lemma, we demonstrate the existence of a set of functionals $v = (v_1,\ldots,v_r)$ that partition the consistent labelings into sets $W_1(v),\ldots,W_r(v)$.  Then  for all $1 \leq i \leq r$, $D_i(v)$ is a proper subset of $\C A_i$ constructed from $W_i(v)$ according to Theorem \ref{tropjoin}. The goal is to show that the dimension of the affine span of $D_i(v)$ is the same as the dimension of the affine span of $\C A_i$, namely $2n-3$.

The general strategy will be to partition the consistent labelings based on the number of leaves not labeled by the identity.  The subset of $\C A_i$ of vectors corresponding to labelings with exactly $c$ non-identity leaf labels has affine span of dimension at most $2n-4$ because this imposes one linear constraint on the vectors. In the following lemma we show that for even $2 \leq c \leq n-5$, the affine span has exactly that dimension.  (Note that the rank of the linear span is one larger than the affine dimension.)
In what follows, we use the notation
$x^\epsilon_g$ to denote the coordinate that
corresponds to the Fourier parameter
associated to $g$ on the edge $\epsilon$. 
We also slightly abuse notation and interpret 
$\C L (\C T)$ as the set of leaf
edges of $\C T$ or as the set of leaf vertices of $\C T$ depending on the context.  Similarly, we not carefully distinguish between a leaf vertex and the leaf edge leading to that vertex.
An example of the construction
from this lemma is illustrated in Example
\ref{ex: Z2 Tree}.

\begin{lemma}
\label{lem: CFN slice}
Let $G = \B Z/2\B Z $ and let $\C T$ be a binary phylogenetic $[n]$-tree. 
Let $\C A$ be the set of vectors describing the monomial map 
parameterizing $V_{\mathcal{T}}^{\B Z/2\B Z}$. Let $\C S_c \subseteq \mathbb R^{2(2n-3)}$ be the hyperplane defined by
\[ \sum_{\ell \in \C L(\C T)} x^\ell_1 = c. \]
For even $2 \leq c \leq n-5$, $\rank\langle \C A \cap \C S_c \rangle= 2n-3$.
\end{lemma}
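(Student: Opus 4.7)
Since every vector in $\C A$ satisfies $x_0^e + x_1^e = 1$ for every edge $e \in \C E(\C T)$, the map $v \mapsto (v_1^e)_{e \in \C E(\C T)}$ is a bijection from $\C A$ onto its image in $\B R^{2n-3}$, and one checks that for $c>0$ the linear span of $\C A \cap \C S_c$ has trivial intersection with the kernel of this projection (summing the leaf-constraint against coefficients forces the would-be kernel direction $(t\mathbf 1,0)$ to satisfy $ct=0$). In the $x_1$-coordinates, each consistent labeling corresponds to a subset $S \subseteq \C L(\C T)$ with $|S|$ even, and the associated vector is $v_S \in \{0,1\}^{\C E(\C T)}$ with $v_S(e) = [|A_e \cap S|\ \text{odd}]$, where $A_e$ is either of the two leaf sets of the split at $e$ (the choice is immaterial because $c$ is even). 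The lemma thus reduces to showing $\dim_{\B R}\langle v_S : |S|=c\rangle = 2n-3$ in $\B R^{\C E(\C T)}$; the immediate affine relation $\sum_{\ell \in \C L(\C T)} v_S(\ell) = c$ gives the upper bound.

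To produce enough vectors in the span, the plan is to fix a base $S_0 \in \binom{\C L(\C T)}{c}$ and extract signed path indicators from swap differences. For $a \in S_0$ and $b \in \C L(\C T) \setminus S_0$, a direct computation shows
\begin{equation*}
   u_{ab} \;:=\; v_{(S_0 \setminus \{a\}) \cup \{b\}} - v_{S_0} \;=\; \epsilon \cdot \mathbf 1_{P_{ab}},
\end{equation*}
where $\mathbf 1_{P_{ab}} \in \{0,1\}^{\C E(\C T)}$ is the indicator of the unique leaf-to-leaf path from $a$ to $b$, and $\epsilon(e) := 1 - 2 v_{S_0}(e) \in \{\pm 1\}$. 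Because entrywise multiplication by $\epsilon$ is a linear automorphism of $\B R^{\C E(\C T)}$, the span of the $u_{ab}$ is carried to the span of the corresponding unsigned path indicators. Moreover every $u_{ab}$ has leaf-coordinate sum zero, while $v_{S_0}$ has leaf-sum $c > 0$, so adding $v_{S_0}$ to the swap span boosts the dimension by exactly one; it suffices to show the swap differences span a $(2n-4)$-dimensional subspace of $\B R^{\C E(\C T)}$.

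The main technical step is then to combine the primary swap differences with iterated swap differences so as to reach every leaf-to-leaf path indicator. A classical tree fact, provable by induction on $n$ via cherry contraction, is that $\{\mathbf 1_{P_{xy}} : x, y \in \C L(\C T)\}$ spans $\B R^{\C E(\C T)}$ for any binary $[n]$-tree with $n \geq 3$. To reach paths with no endpoint in $S_0$, one forms a secondary swap: if $S_1 = (S_0 \setminus \{c\}) \cup \{d\}$ with $a \neq c$ and $b \neq d$, then
\begin{equation*}
   u_{ab}^{(S_0)} - u_{ab}^{(S_1)} \;=\; 2\,\epsilon \cdot \mathbf 1_{P_{ab} \cap P_{cd}},
\end{equation*}
placing indicators of path intersections in the span; iterating these double-swap identities and applying the standard tree identity $\mathbf 1_{P_{xy}} = \mathbf 1_{P_{xz}} + \mathbf 1_{P_{yz}} - 2\,\mathbf 1_{P_{xz} \cap P_{yz}}$ recovers every leaf path up to the $\epsilon$-isomorphism. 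The hypotheses $2 \leq c \leq n-5$ are what make this bookkeeping work: there are always two leaves available in $S_0$ and at least five leaves outside $S_0$, so the required quadruples $(a,b,c,d)$ of distinct leaves can be chosen without collision or exhaustion of the base. The chief obstacle is this final combinatorial verification that iterated swap differences really exhaust $H := \{x : \sum_\ell x_\ell = 0\}$; once that is in place, adding $v_{S_0}$, which lies off $H$, produces the full $2n-3$ dimensions as claimed.
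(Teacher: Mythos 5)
Your two computational identities are correct---the single-swap formula $v_{(S_0\setminus\{a\})\cup\{b\}}-v_{S_0}=\epsilon\cdot\mathbf 1_{P_{ab}}$ and the double-swap formula producing $2\,\epsilon\cdot\mathbf 1_{P_{ab}\cap P_{cd}}$ both check out, and the reduction to showing that the swap differences span the $(2n-4)$-dimensional hyperplane $H=\{x:\sum_\ell x_\ell=0\}$ is valid. But the argument stops exactly where the content of the lemma lies, and the plan you sketch for finishing it cannot work as written. Every single or iterated swap difference lies in $H$, whereas for leaves $x,y$ both inside $S_0$ (or both outside) the vector $\epsilon\cdot\mathbf 1_{P_{xy}}$ has leaf-coordinate sum $\mp 2\neq 0$ and therefore lies outside $H$; hence ``recovering every leaf path up to the $\epsilon$-isomorphism'' is impossible, and the classical fact that all $\binom{n}{2}$ path indicators span $\B R^{\C E(\C T)}$ cannot be imported. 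Your three-point identity fails for the same reason: for $x,y\notin S_0$ and $z\in S_0$ the correction term $\mathbf 1_{P_{xz}\cap P_{yz}}$ contains the leaf edge at $z$, so its $\epsilon$-twist has leaf-sum $-1$ and is not reachable by double swaps, whose intersections $P_{ab}\cap P_{cd}$ with $a,b,c,d$ distinct contain no leaf edges at all. What actually has to be proved is that the cross-paths $\mathbf 1_{P_{ab}}$ with $a\in S_0$, $b\notin S_0$, together with the reachable intersections, span $\epsilon\cdot H$---and that is precisely the step you label ``the chief obstacle'' and defer, so the proposal is an outline rather than a proof.

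For comparison, the paper closes exactly this gap by direct construction. For each internal edge $E$ it arranges a quartet of subtrees around $E$ and verifies the four-term identity $F_{13}+F_{24}-F_{12}-F_{34}=2e_1^E$; in your language this is a double swap in which $P_{ab}\cap P_{cd}=\{E\}$, with $a,c$ chosen from labeled leaves in the two subtrees on opposite sides of $E$ and $b,d$ from unlabeled leaves in the remaining two. It then treats each leaf edge $\ell$ by a separate argument, summing the differences $e_1^\ell-e_1^\epsilon$ over a $c$-element subset to produce $c\,e_1^\ell$. If you choose your quadruples so that the two paths cross only in the target edge, your double-swap identity yields every internal-edge basis vector directly; but you will still need an independent argument for the $n$ leaf edges, since path intersections never contain them, and at that point you will have essentially rederived the paper's proof.
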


\begin{proof}

Let $\pi: \B R^{2(2n-3)} \to \B R^{2n-3}$ be the projection that forgets coordinates $x_0^{\epsilon}$ for identity element $0 \in \B Z/2\B Z$ and each edge $\epsilon \in \C E(\C T)$.  Then $\pi(\C A \cap \C S_c)$ is a set of $0/1$ vectors with $c$ leaf coordinates equal to 1.

Let $E$ be an internal edge of $\mathcal{T}$.
Redraw $\mathcal{T}$ as below where each $r_j$ is a
rooted subtree of $\mathcal{T}$ with root $\rho_j$.

\begin{center}
\includegraphics[width=5cm]{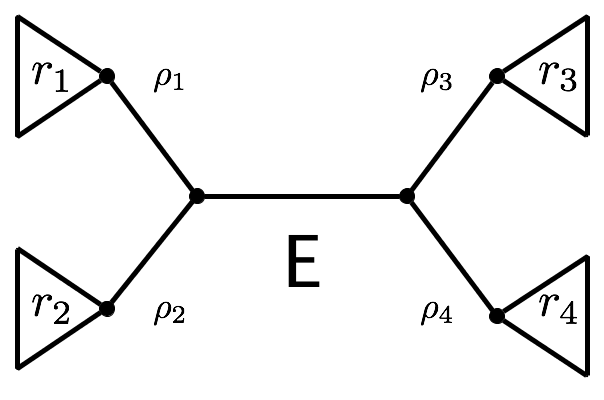}
\end{center}
For $1 \leq j \leq 4$, choose $L_j$ to be a subset of the leaves of $r_j$  so that each
$|L_j|$ is odd and so that
$|L_1| + |L_2| + |L_3| + |L_4| =  c$.
This is always possible since $c \leq n-5$.
Now, label all the leaves in 
$L_1 \cup L_2 \cup L_3 \cup L_4$ by $1$ to give a consistent
leaf-labeling of $\mathcal{T}$. This induces a consistent edge-labeling of $\mathcal{T}$. Observe that in each rooted subtree $r_j$ there is a unique leaf $\lambda_j$ such that the path from $\rho_j$ to $\lambda_j$
involves only edges labeled by $1$. 

Let $F_{kl} \in \pi(\C A \cap \C S_c)$ be the vector corresponding to the subforest of $\mathcal{T}$ induced by labeling all the leaves in 
$$\left ( \bigcup_{1 \leq j \leq 4} (L_j\setminus \{\lambda_j\}) \right )  \cup \{\lambda_k, \lambda_l \}$$ by $1$ and all other leaves by 0. Let $F_{E}$ be the vector that corresponds to the subforest
induced by labeling all of the leaves in 
$$ \bigcup_{1 \leq j \leq 4} (L_j\setminus \{\lambda_j\})$$
by $1$ and all other leaves by 0. The key observation
is that 
$$F_{kl} = F_E + \displaystyle \sum_{\epsilon \in p(\lambda_k,\lambda_l)}
e^\epsilon_1,$$ where $p(\lambda_k,\lambda_l)$ is the path between $\lambda_k$ and $\lambda_l$.
Therefore, we have
$$F_{13} + F_{24}
 - F_{12}  - F_{34} = 2e_1^{E}.$$
 
Now suppose that $\ell$ is a leaf edge of $\mathcal{T}$. Let $\Lambda$ be a $c$-element subset of the leaves and label each leaf in this subset by 1 and all of the rest by 0. Since $c$ is even, this is a consistent labeling. The vector corresponding to the subforest induced by this labeling is in $\pi(\C A \cap \C S_c)$. Moreover, 
since we have already shown that $\pi(\C A \cap \C S_c)$ contains
$e^\epsilon_1$ for any internal edge $\epsilon$, the vector 
$F_\Lambda = \sum_{\epsilon \in \Lambda}
e^\epsilon_1$ must also be in 
$\pi(\C A \cap \C S_c)$. For each leaf $\epsilon \neq \ell$, let $\Lambda_\epsilon$
be any $c$-element subset of the leaves that
contains $\epsilon$ but not $\ell$. Then
$$F_{(\Lambda_\epsilon \setminus \{\epsilon\} )\cup \{\ell\}} - F_{\Lambda_\epsilon} = e^\ell_1 - e^\epsilon_1,$$
and, we get,  $$F_{\Lambda} + \sum_{\epsilon \in \Lambda} (e_1^{\ell} - e_1^{\epsilon}) = ce^\ell_1.
$$ 
Since we can repeat this procedure for every leaf edge $\ell$ and we have shown that $e_1^{E} \in \langle \C A \cap \C S_c \rangle$ for every internal edge $E$, we can conclude that 
$\rank \langle \C A \cap \C S_c \rangle \geq \rank \langle \pi(\C A \cap \C S_c) \rangle \geq 2n -3$. 
\end{proof}

\begin{ex}
\label{ex: Z2 Tree}

The figure below represents a consistent leaf-labeling 
for which $c = 16$. Both the blue and red vertices are elements of the $L_j$ ($|L_1| = 7$ and 
 $|L_2| =  |L_3| =  |L_4| = 3$) and the blue vertices are the $\lambda_i$. All of the colored edges correspond to non-zero entries in the vector $F_{12}$ and the red colored edges to the non-zero entries in $F_E$.

\begin{center}
\includegraphics[width=5cm]{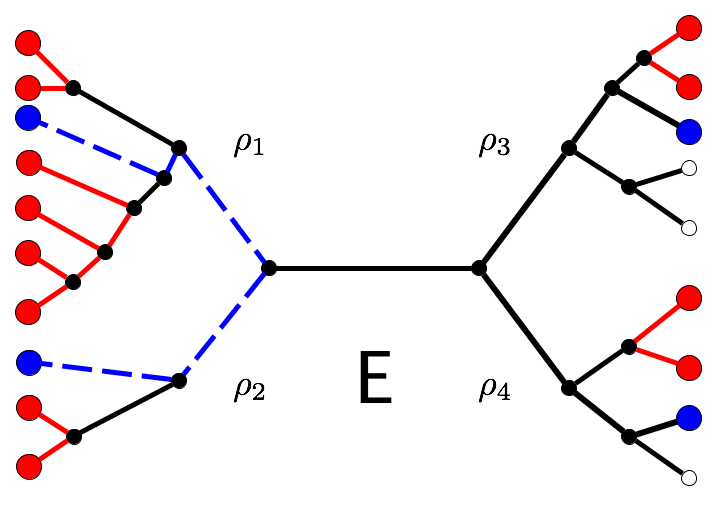}
\end{center}

\end{ex}

To use Draisma's Lemma to prove Theorem 
\ref{thm: main CFN}, we need to construct sets $D_i(v)$ with dimension $2n-3$ for each $1 \leq i \leq r$.
In the following proof, we will show how 
to construct $v$ so that 
 $\C A_i \cap \C S_{2i} \subseteq D_i(v)$.
 By Lemma \ref{lem: CFN slice}, since the affine span of each $\C A_i \cap \C S_{2i}$ has dimension $2n-4$, this ensures
 that dim$(D_i(v)) \geq 2n - 4$. However,
 to prove the theorem, we will also need to ensure each $D_i(v)$  contains a vector outside of the hyperplane $\C S_{2i}$.  For $i = 1$,
this vector will be the vector corresponding to the trivial labeling, which we will call $p_0$.  For each $i > 1$, we will need to ``borrow'' a vector $p_i$ from an adjacent slice.

\begin{proof}[Proof of Theorem \ref{thm: main CFN}]

Note that for $n \geq 7$,
 \[ |\C A_i \cap \C S_{2i}| = \binom{n}{2i} \geq \binom{n}{2} > 2n-3. \]
 Therefore, the set $\C A_i \cap \C S_{2i}$ must have some linear dependencies. 
Choose $p_i \in \C A_i \cap \C S_{2i}$ such that $\langle (\C A_i \cap \C S_{2i}) \setminus \{p_i\}\rangle$ still has dimension $2n-3$.  Let $L_i$ be the set of leaf edges labeled 1 in the labeling corresponding to the vector $p_i$.

For each $1 \leq i \leq r$ let $\pi_i:\B R^{2(2n-3)} \to \B R^{2n}$ be the projection that forgets the coordinates of the non-leaf edges.
Let $\C H_1,\ldots,\C H_{r-1}$ be the hyperplanes in $\B R^{2n}$ with $\C H_i$ defined by
 \[ \sum_{j \in [n]} x_1^j + \frac{2}{4i-1}\sum_{j \in L_i} x^j_1 = 2i + 1. \]
This hyperplane is constructed so that $\pi_i(q) \in \C H_i^-$ for all $q \in (\C A_i \cap \C S_{2i}) \setminus \{p_i\}$ but $\pi_i(p_i) \in \C H_i^+$ and $\pi_{i+1}(q) \in \C H_i^+$ for all $q \in \C A_{i+1} \cap \C S_{2i+2}$.

By Proposition \ref{prop:hyperplanes} there is a sequence of functionals $v' = (v'_1,\ldots,v'_r)$ such that $P_i=R_i(v')\cap P$ (where $P$, $P_i$ and $R_i(v')$ are defined in Proposition \ref{prop:hyperplanes}). 
Letting
 \[ v = (v'_1 \circ \pi_1,\ldots, v'_r \circ \pi_r), \]
we have $(\C A_i \cap \C S_{2i}) \cup \{p_{i-1}\} \setminus \{p_i\} \subseteq D_i(v)$ for $1 \leq i \leq r$ and so the dimension of $D_i(v)$ is $2n-3$.  By Theorem \ref{tropjoin} and the comments after, $V^{\B Z/2\B Z}_{\C T_1} * \cdots * V^{\B Z/2\B Z}_{\C T_r}$ has the expected projective dimension, $r(2n-3)+r-1$.
\end{proof}

\subsection{Group-based models with $|G|>2$}
We continue to assume the trees $\C T_1,\ldots,\C T_r$ are binary and that $B$ is trivial, but consider group $G$ with $|G| > 2$.

\begin{thm}\label{thm: main nonCFN}
Let $\mathcal{T}_1,\ldots, \mathcal{T}_r$ be 
binary phylogenetic $[n]$-trees with $n \geq 2r+4$, and let $G$ be an abelian group with $|G| > 2$.  Then $V^G_{\C T_1} * \cdots * V^G_{\C T_r}$ has the expected projective dimension, $(|G| -1)r(2n-3)+r -1$.
\end{thm}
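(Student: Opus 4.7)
The plan is to adapt the proof of Theorem~\ref{thm: main CFN} to the regime $|G|>2$, applying Draisma's Lemma (Theorem~\ref{tropjoin}) in the same way. Each $V^G_{\C T_i}$ is parameterized by a monomial map whose exponent set $\C A_i\subseteq\B R^{|G|(2n-3)}$ has projective rank $(|G|-1)(2n-3)$, so the expected projective dimension of the join is $r(|G|-1)(2n-3)+r-1$. To match it, it suffices to exhibit a tuple of functionals $v=(v_1,\ldots,v_r)$ so that each Draisma slab $D_i(v)$ has full linear rank $(|G|-1)(2n-3)+1$.

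The first step is the analogue of Lemma~\ref{lem: CFN slice}: for a binary $n$-leaf tree $\C T$ and the hyperplane $\C S_c\subseteq\B R^{|G|(2n-3)}$ cut out by $\sum_{\ell\in \C L(\C T)}(1-x_0^\ell)=c$, I claim $\rank\langle\C A\cap\C S_c\rangle=(|G|-1)(2n-3)$ for every $2\le c\le n-4$. After projecting away the $x_0^\epsilon$ coordinates, the goal is to show that every standard basis vector $e^\epsilon_g$ with $g\in G\setminus\{0\}$ lies in the translated span. I would fix an internal edge $E$, decompose $\C T$ into four rooted subtrees $r_1,\ldots,r_4$ at $E$, and for each $g\neq 0$ pick leaf subsets $L_1,\ldots,L_4$ with $\sum|L_j|=c$ together with nontrivial group labels realising any prescribed subtree-sums in $G$. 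Because $|G|>2$, every prescribed subtree-sum is achievable as long as $|L_j|\ge 1$, so the parity requirement of the CFN argument is dropped; this is precisely what extends the feasible range of $c$ from $n-5$ to $n-4$. A signed combination of four such labelings, mirroring $F_{13}+F_{24}-F_{12}-F_{34}=2e^E_1$, isolates $e^E_g$ in the span, and standard leaf-exchange differences then recover $e^\ell_g$ for every leaf $\ell$ and every $g\ne 0$.

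With this lemma in hand, I would choose slices $c_i=2i$ for $i=1,\ldots,r$, which satisfy $c_r=2r\le n-4$ under the hypothesis $n\ge 2r+4$. Since each slice $\C A_i\cap\C S_{c_i}$ contains vastly more consistent labelings than its rank, for each $i$ I can pick $p_i\in\C A_i\cap\C S_{c_i}$ whose removal preserves rank $(|G|-1)(2n-3)$. I would then construct $r-1$ hyperplanes $\C H_1,\ldots,\C H_{r-1}$ in leaf coordinates, patterned on the CFN construction (with a weighted correction on the leaves in the support of $p_i$), tailored so that $p_i$ is separated from the rest of $\C A_i\cap\C S_{c_i}$ while being grouped with $\C A_{i+1}\cap\C S_{c_{i+1}}$. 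Proposition~\ref{prop:hyperplanes} then supplies functionals $v'_1,\ldots,v'_r$, and composing with the projection to leaf coordinates produces $v$ so that $D_i(v)\supseteq\big((\C A_i\cap\C S_{c_i})\setminus\{p_i\}\big)\cup\{p_{i-1}\}$ has rank $(|G|-1)(2n-3)+1$. Theorem~\ref{tropjoin} then yields the claimed expected dimension.

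The main obstacle is the slice lemma. In the CFN setting a single binary indicator per edge records the label, whereas for $|G|>2$ each edge contributes $|G|-1$ independent indicators and the construction must produce $e^E_g$ separately for each $g\in G\setminus\{0\}$, or more efficiently in a way that is uniform in $g$. I expect the trickiest bookkeeping to be ensuring that the four-subtree decomposition remains feasible when the internal edge $E$ abuts a pendant subtree consisting of a single leaf, since then $L_j$ is forced and no internal relabeling within $r_j$ is available; this is exactly the corner case that the improved range $c\le n-4$ must accommodate, and where the gain of one leaf over the CFN argument will have to be verified carefully.
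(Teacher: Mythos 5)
Your overall framing (Draisma's Lemma, cardinality slices, full-rank cells $D_i(v)$) matches the paper's, but the central step of your slice lemma does not go through, and this is exactly where the paper's proof genuinely departs from the CFN argument rather than merely dropping a parity condition. For $|G|>2$ the coordinates $x^E_g$, $g\in G\setminus\{0\}$, are independent indicator coordinates, and the four-term combination modeled on $F_{13}+F_{24}-F_{12}-F_{34}=2e^E_1$ cannot isolate $e^E_g$. To keep each $F_{kl}$ consistent while changing only the two distinguished leaves you must label $\lambda_k\mapsto h$ and $\lambda_l\mapsto -h$; forcing the four $\lambda_j$-segments to cancel in the signed combination then forces the two crossings of $E$ (in $F_{13}$ and $F_{24}$) to carry labels $-h$ and $h$ respectively, so the combination equals $e^E_h+e^E_{-h}$ rather than $2e^E_h$. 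These are distinct basis vectors whenever $h\neq -h$, and no choice of pairings avoids this: two-point swaps across $E$ only ever produce the symmetric combinations $e^E_h+e^E_{-h}$, which span a proper subspace as soon as $G$ has an element of order greater than $2$. Your proposal offers no mechanism for separating $e^E_g$ from $e^E_{-g}$, and the corner case you flag (a pendant single-leaf subtree) is not the actual obstruction.

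The paper's proof sidesteps both difficulties. First, it abandons single-cardinality slices and borrowed vectors $p_i$ entirely: the parallel hyperplanes $\sum_{j}\sum_{g\neq 0}x^j_g=\frac{3}{2}+2j$ of Lemma \ref{lem: construct v for any g} place the labelings with $2i$ \emph{and} $2i+1$ non-identity leaves into the same cell $D_i(v)$ (possible precisely because $|G|>2$ admits consistent labelings of odd cardinality), which supplies the extra unit of rank automatically. Second, and more importantly, Lemma \ref{lem: dim D general g} replaces the internal-edge/four-subtree combination by a decomposition at an internal \emph{vertex} into three rooted subtrees, and uses the telescoping identities $v^1_m=-f^1_{mg}+f^1_g+f^1_{(m-1)g}$ together with a relation built from three non-identity elements satisfying $a+b+c=0$ to produce first $df^1_g+F$, then $F$ itself, hence $f^1_g$, and finally $e^E_g$ by induction on the length of the path. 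That telescoping machinery is precisely what breaks the $g\leftrightarrow-g$ symmetry, and it has no counterpart in your proposal; without it, your argument only establishes that the span of $\pi(D_i(v))$ contains the vectors $e^E_g+e^E_{-g}$, which is not enough to reach rank $(|G|-1)(2n-3)+1$.
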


The proof follows the same structure as Theorem \ref{thm: main CFN}, but requires slightly different arguments.  One way that the case $|G| > 2$ is actually simpler is that for any integer 
$2 \leq j \leq n$ there exists
a constistent leaf-labeling that labels exactly $j$ leaves by non-identity elements of $G$ (this is only the case for even $j$ when $G = \B Z/2\B Z$).  Indeed, we have the following useful fact: Let $G$ be a finite group with order $|G| > 2$.  For any $g \in G$ and any $N \geq 2$, $g$ can be expressed as the sum of exactly $N$ non-identity elements.

We again will partition the vectors
corresponding to constistent leaf labelings based on the number of leaves labeled by non-identity elements. But, we will not need to  ``borrow'' the vectors $p_i$ from adjacent slices as in the proof of Theorem \ref{thm: main CFN}.

While in this section we are focused on binary trees, we prove Lemma \ref{lem: construct v for any g} for the more general, non-binary case.

\begin{lemma} 
\label{lem: construct v for any g}
Let $\mathcal{T}_1,\ldots, \mathcal{T}_r$ be 
phylogenetic (not necessarily binary) $[n]$-trees with $m_1,\ldots,m_r$ edges respectively.  Let
$\C A_i$ be the set of vectors describing the monomial map 
parameterizing $V_{\mathcal{T}_i}^{G}$. 
There exists 
$v = (v_1,\ldots,v_r)$ in 
$\prod_{i=1}^{r} (\mathbb{R}^{|G|m_i})^*$
such that $D_1(v)$ contains the vectors of $\C A_1$ corresponding
to consistent leaf-labelings of $\mathcal{T}_1$ with 
$0$,$2$, or $3$ non-identity labels and
$D_i(v)$ contains the vectors of $\C A_i$ corresponding
to consistent leaf-labelings of $\mathcal{T}_i$ with 
$2i$ or $2i+1$ non-identity labels for $2\leq i \leq r$.
\end{lemma}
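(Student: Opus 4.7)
The plan is to exploit the fact that $\C T_1,\ldots,\C T_r$ all share the same set of $n$ leaf edges, so the leaf-coordinate part of the exponent vector $A_i(\xi)$ for a consistent leaf labeling $\xi$ is identical across $i$. I will construct each $v_i$ to depend only on the leaf coordinates of $A_i(\xi)$, in such a way that $v_i(A_i(\xi))$ is a prescribed affine function of the count $c(\xi):=|\{\ell\in\C L(\C T_i):\xi(\ell)\ne 0\}|$. The lemma then reduces to choosing $r$ affine functions of the single integer variable $c$ so that function $1$ attains the strict minimum at $c\in\{0,2,3\}$ and function $i$ attains the strict minimum at $c\in\{2i,2i+1\}$ for $2\le i\le r$.

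Concretely, for each $i$ let $v_i\in(\B R^{|G|m_i})^*$ be the linear functional that vanishes on every internal-edge coordinate and, on leaf-edge coordinates, assigns $\alpha_i/n$ to each identity entry $x^\ell_0$ and $\alpha_i/n+\beta_i$ to each non-identity entry $x^\ell_g$ with $g\ne 0$. Because the leaf part of $A_i(\xi)$ places $n-c(\xi)$ indicators on identity entries and $c(\xi)$ on non-identity entries, one computes $v_i(A_i(\xi))=\alpha_i+\beta_i\,c(\xi)$, a value that depends on the tree only through $c(\xi)$. I take $\beta_i=-(i-1)$ and $\alpha_i=\sum_{j=1}^{i-1}t_j$, where $t_1=\tfrac72$ and $t_j=2j+\tfrac32$ for $2\le j\le r-1$. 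These are precisely the (linearized) functionals that Proposition \ref{prop:hyperplanes} produces from the $r-1$ parallel hyperplanes $\sum_{\ell,g\ne 0}x^\ell_g=t_j$ in the common leaf-coordinate space $\B R^{|G|n}$, with the affine constants absorbed using the identity $\sum_\ell\sum_g x^\ell_g=n$ that holds on every column of $A_i$.

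A short calculation gives $v_i(c)-v_{i+1}(c)=c-t_i$, so $v_{i+1}<v_i$ exactly when $c>t_i$. By monotonicity of $(t_j)$, for any $c\in(t_{i-1},t_i)$ (with conventions $t_0=-\infty$ and $t_r=+\infty$) the value $v_i(c)$ is the strict minimum among $v_1(c),\ldots,v_r(c)$. The required integer ranges sit inside these open slabs: $\{0,2,3\}\subset(-\infty,\tfrac72)$ and $\{2i,2i+1\}\subset(2i-\tfrac12,2i+\tfrac32)$ for $i\ge 2$. This yields $A_1(\xi)\in D_1(v)$ whenever $c(\xi)\in\{0,2,3\}$ and $A_i(\xi)\in D_i(v)$ whenever $c(\xi)\in\{2i,2i+1\}$, as claimed. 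The step I expected to be the main obstacle is cross-tree comparability, since the $v_i$ live on different-dimensional spaces $\B R^{|G|m_i}$; it dissolves here because sharing the leaf set forces $v_i(A_i(\xi))$ to be a function of $c(\xi)$ only, collapsing the whole question to the one-variable verification above.
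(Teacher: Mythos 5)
Your proof is correct and follows essentially the same route as the paper: both project onto the shared leaf coordinates and separate labelings by the count of non-identity leaf labels using the parallel hyperplanes $\sum_{\ell}\sum_{g\neq 0}x^{\ell}_g=\tfrac{3}{2}+2j$, then invoke the telescoping construction of Proposition \ref{prop:hyperplanes}. The only difference is that you write out the resulting functionals and the one-variable minimality check explicitly rather than citing the proposition as a black box.
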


\begin{proof} Each tree $\C A_i$ is in $\B R^{|G|m_i}$.  Let $\pi_i: \B R^{|G|m_i} \to \B R^{|G|n}$ be the projection which forgets the coordinates of the non-leaf edges.  Let $\C H_1,\ldots,\C H_{r-1}$ be the parallel hyperplanes in $\B R^{|G|n}$ with $\C H_j$ defined by
 \[ \sum_{i \in [n]} \sum_{g \in G \setminus \{0\}} x_g^i = \frac{3}{2} + 2j. \]
These planes partition $\B R^{|G|n}$ into sets $R_1,\ldots,R_r$ where the projections of labelings with $0,2,$ or $3$ non-identity leaves are in $R_1$ and projections of labelings with $2i$ or $2i + 1$ non-identity leaves are in $R_i$ for $2 \leq i \leq r$.
By Proposition \ref{prop:hyperplanes} there is a sequence of functionals $v' = (v'_1,\ldots,v'_r)$ with $v'_i$ the minimum on $R_i$ and 
 \[ v = (v'_1 \circ \pi_1,\ldots, v'_r \circ \pi_r) \] is the
 desired functional.\end{proof}

\begin{lemma} 
\label{lem: dim D general g}
Let $\mathcal{T}_1,\ldots, \mathcal{T}_r$ be 
phylogenetic $[n]$-trees with $n \geq 2r + 4$.
Let $1 \leq i \leq r$ and let $v$ and $D_i(v)$ be as in Lemma \ref{lem: construct v for any g}.
If $\C T_i$ is binary then $\dim \langle D_i(v) \rangle \geq (|G| -1)(2n-3)+1.$
\end{lemma}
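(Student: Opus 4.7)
The plan is to follow the structure of the proof of Theorem~\ref{thm: main CFN} via its main technical ingredient Lemma~\ref{lem: CFN slice}, adapted to an abelian $G$ with $|G|>2$. The full rank of $\langle\C A_i\rangle$ equals $(|G|-1)(2n-3)+1$: the $m=2n-3$ homogeneity constraints $\sum_{g\in G}x^e_g=1$ confine $\C A_i$ to an affine subspace of dimension $(|G|-1)(2n-3)$, and the linear span picks up one additional dimension from a base point off the origin. The task is to show that $D_i(v)$ attains this maximal rank.

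First I would reduce to a single-slice statement. Each slice $\C A_i\cap \C S_c$ lies in one further hyperplane $\sum_{\ell\in\C L(\C T_i),\,g\neq 0}x^\ell_g=c$, so has rank at most $(|G|-1)(2n-3)$. The slices comprising $D_i(v)$---namely $\C S_{2i}$ and $\C S_{2i+1}$ when $i\geq 2$, or the three-slice union $\C S_0\cup\C S_2\cup\C S_3$ when $i=1$---lie in distinct parallel hyperplanes, hence the rank of their union exceeds that of any single slice by at least one (the singleton slice $\C S_0=\{F_0\}$ alone supplies the needed boost in the $i=1$ case). It therefore suffices to show that at least one of the two nontrivial slices attains the maximum rank $(|G|-1)(2n-3)$.

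For the single-slice claim I would adapt the quartet construction of Lemma~\ref{lem: CFN slice}. For each internal edge $E$ of $\C T_i$, redraw $\C T_i$ with $E$ as the central edge bordered by four pendant subtrees $r_1,r_2,r_3,r_4$ and select $\lambda_j\in r_j$. For each non-identity $g\in G$, construct labelings $F^g_{kl}\in \C A_i\cap\C S_c$ placing the pair $(g,-g)$ at $(\lambda_k,\lambda_l)$, with the remaining $c-2$ non-identity labels filled in as background $(h,-h)$-pairs inside a single fixed subtree so that they contribute nothing to edge labels outside that subtree. The CFN identity $F_{13}+F_{24}-F_{12}-F_{34}=2e^E_1$ then generalizes to a linear combination of the $F^g_{kl}$ producing the basis-difference direction $e^E_g-e^E_0$ (up to an orientation sign) for each non-identity $g$ and each internal edge $E$. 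Leaf-edge directions $e^\ell_g-e^\ell_0$ are obtained analogously by varying which leaf carries a given label. Together these span the $(|G|-1)(2n-3)-1$ dimensional linear subspace parallel to the slice; adjoining any single vector of $\C A_i\cap\C S_c$ then yields the claimed rank $(|G|-1)(2n-3)$.

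The principal obstacle is that when $|G|>2$ the edge label along a path is $\pm g$ depending on orientation, so the naive quartet combination picks up unwanted cross-terms of the form $e^\epsilon_g-e^\epsilon_{-g}$ on leaf edges inside $r_2$ and $r_3$---terms invisible in CFN because $g=-g$ in characteristic two. I expect this to be the hardest step. Two natural remedies are to symmetrize over $g$ and $-g$ by combining the quartet for $g$ with the quartet for $-g$ so that sign-sensitive contributions cancel in pairs, or to exploit the freedom $c\geq 2$ provides to insert carefully chosen background $(h,-h)$-pairs within $r_2$ and $r_3$ so as to absorb the cross-terms into basis directions already shown to lie in the span.
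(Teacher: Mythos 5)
There is a genuine gap, and it sits exactly where you say you expect the hardest step to be. Your reduction to a single slice is fine as far as it goes (a slice lies in a parallel affine hyperplane off the origin, so adjoining one vector from the neighbouring slice raises the rank by one), but it commits you to proving that a \emph{single} slice $\C A_i\cap\C S_c$ has rank $(|G|-1)(2n-3)$ --- a statement the paper never proves and which is genuinely delicate. The quartet identity $F^g_{13}+F^g_{24}-F^g_{12}-F^g_{34}$ produces, besides the central-edge term, cross-terms of the form $e^{\epsilon}_{g}-e^{\epsilon}_{-g}$ on the path edges inside the subtrees. Your first remedy (adding the same combination with $g$ replaced by $-g$) cancels these, but then every vector you produce is fixed by the coordinate involution $x^{E}_{g}\leftrightarrow x^{E}_{-g}$: on the central edge you obtain only the symmetric direction $e^{E}_{g}+e^{E}_{-g}-2e^{E}_{0}$, and for an element of order greater than $2$ this can never separate $e^{E}_{g}$ from $e^{E}_{-g}$, so the construction stops short of the full direction space. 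Your second remedy presupposes that the antisymmetric directions are already in the span, which is the point at issue. As a warning sign (outside your hypotheses, but indicative): for the tripod with $G=\B Z/3\B Z$ the slice $c=2$ consists of six vectors all orthogonal to $\sum_{E}(e^{E}_{1}-e^{E}_{2})$, so its rank is $5$, one short of the maximum $(|G|-1)m=6$ that a slice could attain.

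The paper avoids this by abandoning the single-slice framework: $D_i(v)$ contains \emph{both} slices $\C S_{2i}$ and $\C S_{2i+1}$, and the identities that isolate the individual directions $e^{E}_{g}$ genuinely mix them. Concretely, it works around an internal \emph{vertex} with three subtrees rather than an internal edge with four; it first extracts the differences $f^{1}_{h}-f^{2}_{h}$ and $f^{1}_{h}-f^{3}_{h}$ from labelings with $2i$ non-identity leaves; it then runs a telescoping sum over the multiples $g,2g,\dots,dg$ to produce $df^{1}_{g}+F$; and finally it uses the relation $a+b+c=0$ --- which requires labelings with $2i+1$ non-identity leaves, i.e.\ the odd slice --- to produce $F$ itself and hence $f^{1}_{g}$, after which induction on path length recovers each $e^{E}_{g}$. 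These two group-theoretic devices are precisely what your sketch lacks; without something playing their role, the step you defer remains open, and it cannot be closed inside a single slice by symmetrization alone.
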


\begin{proof}
It will be convenient to work with the dehomogenized vectors, so let $\pi:\B R^{|G|(2n-3)} \to \B R^{(|G|-1)(2n-3)}$ be the projection which forgets coordinates $x^E_0$ for identity element $0 \in G$ for each edge $E \in \C E(\C T_i)$.  In order to prove the lemma, we will show that for any edge $E \in \C E(\C T_i)$ and any non-identity $g \in G$, that $e^E_g$ is in the span of $\pi(D_i(v))$.

Choose any internal vertex $v$ of $\mathcal{T}_i$ and redraw $\mathcal{T}_i$ as below where each $r_j$ is a rooted subtree of $\mathcal{T}_i$ with root $\rho_j$.

\begin{center}
\includegraphics[width=5cm]{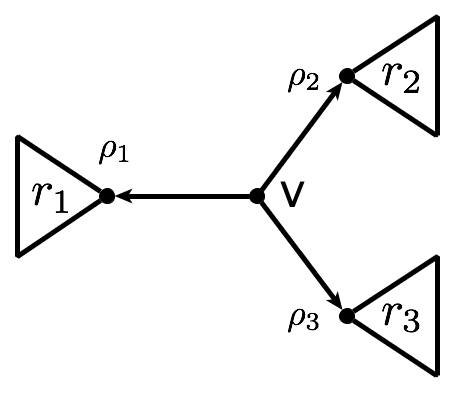}
\end{center}

Choose the orientation of the edges as shown and so that all edges 
in $r_j$ are directed away from $\rho_j$. 
For $1 \leq j \leq 3$,
choose $L_j$ to be a subset of the leaves of 
$r_j$ so that each
$|L_j|$ is odd and so that
$|L_1| + |L_2| + |L_3| =  2i +1$.
This is always possible since $i \leq r$ and $n \geq 2r + 4$. Now we will distinguish a particular leaf in each $r_j$.
There is a unique path from the root $\rho_j$ down $r_j$ such that every vertex on the path has an odd number of leaves in $L_j$ as descendants. 
Label the terminus of this path in $r_j$ by $\lambda_j$. 
Construct a consistent leaf-labeling that labels each of the $\lambda_j$ by the identity and the other $2i - 2$ leaves by non-identity
elements of $G$ so that 
in the induced consistent edge-labeling, every edge on the paths from $\rho_j$ to the $\lambda_j$ is labeled by the identity. 
By our construction, this is always possible using any single non-identity element and its inverse.
Let $F$ be the vector that corresponds to this consistent leaf-labeling.

Now fix non-identity $g \in G$. For any $h \in G$, let $f^j_h$ be the vector with a $1$ in the entry corresponding to $e^E_h$ for each 
edge $E$ on the path from leaf $\lambda_j$ to $v$ and all other entries equal to zero.
Then for any $h \in G$, 
$(f^1_h + f^2_{-h} + F)$ and 
$(f^3_h + f^2_{-h} + F)$ 
are in $\Span{\pi(D_i(v))}$, so their difference, 
$f^1_h - f^3_h$ is in the span. 
Likewise, $f^1_h - f^2_h$  is in the span. 
For any $2 \leq m \leq d-1$
define the vector 
\begin{align*}
v^1_m :=& 
(f^1_{-mg} + f^2_g + f^3_{(m-1)g} + F ) + 
(f^1_g - f^2_g) + 
(f^1_{(m-1)g} - 
f^3_{(m-1)g}) - \\
&(f^1_{mg} +
 f^2_{-mg} + F) - 
 (f^1_{-mg} - 
 f^2_{-mg}) \\
=& (-f^1_{mg} + f^1_g + f^1_{(m-1)g}).
\end{align*}
Then since each $v^1_m \in \Span{ D_i(v)}$, the vector
$$(f^1_{g} +
 f^2_{(d-1)g} + F) +
 (f^1_{(d-1)g} - 
 f^2_{(d-1)g}) + 
 v^1_{d-1} + \ldots + v^1_2 =
 df^1_g + F$$
 is in $\Span{ D_i(v)}$.
Let $a,b,c \in G$ be non-identity elements with $a + b + c = 0$.  Then in $\Span{ D_i(v)}$ is also the vector
\[ (f^1_a + f^2_{-a} + F) + (f^2_b + f^3_{-b} + F) + (f^1_{-c} + f^3_c + F)\]\[ - (f^1_a + f^2_b + f^3_c + F) - (f^1_{-c} + f^2_{-a} + f^3_{-b} + F) \]
\[ = F. \]
Consequently, we have
$f^1_g \in \Span{\pi(D_i(v))}$.
 
To show that $e^E_g$ is in $\Span{\pi(\C A \cap \C H^-)}$ we perform induction on the length of the path represented by $f^E_g$.  For any leaf $E$, we can choose the vertex $v$ incident to this leaf edge so that $f^1_g = e^E_g$.  If the path represented by $f^E_g$ has length $n$, all the edges $E' \neq E$ in the path have their path to a leaf shorter than $n$ so we assume that $e^{E'}_g$ is in the span.  Subtracting these from $f^E_g$ leaves only $e^E_g$ so it is in the span as well.

This proves that $\Span{\pi(D_i(v))}$ has full dimension $(|G|-1)(2n-3)$.  It remains to show that $\Span{D_i(v)}$ has dimension one greater.  In the above argument $df_g^1$ is produced as an integer combination of vectors in $\pi(D_i(v))$.  Note that the sum of the integer coefficients of this linear combination is 0, so there is a vector $w \in \Span{D_i(v)}$ with $\pi(w) = f_g^1$ satisfying $\sum_{g \in G}x_g^E = 0$ for all edges $E$; in particular, $w$ is the vector with a $1$ in the entry corresponding to $e^E_g$, a $-1$ in the entry corresponding to $e^E_0$  for each edge $E$ on the path from $\lambda_1$ to $v$,  and a zero in all other entries. Let $Z$ denote the subspace of $\B R^{|G|(2n-3)}$ defined by the $2n-3$ equations $\sum_{g \in G}x_g^E = 0$, so $Z$ has dimension $(|G|-1)(2n-3)$.  Similarly each vector $e_g^E \in \Span{\pi(D_i(v))}$ can be produced as a linear combination of projections of vectors in $Z$, so $e_g^E - e_0^E \in \Span{D_i(v)}$ for all edges $E$ and all $g \in G \setminus \{0\}$.  Therefore $\Span{D_i(v)}$ contains $Z$.  Note however that any vector in $D_i(v)$ is outside of $Z$, so
 \[ \dim \Span{D_i(v)} \geq (|G|-1)(2n-3) + 1. \]
\end{proof}

Combining Lemma \ref{lem: construct v for any g} and \ref{lem: dim D general g} gives a proof of Theorem \ref{thm: main nonCFN}.\\

\section{Non-binary Trees and other group-based models}
\label{sec: non-binary trees and other group-based models}

In this section, we show that many of the results of the previous sections generalize to non-binary trees and to group-based models where we identify the parameters of some group elements.  First we tackle the case of non-binary trees. 
\subsection{Non-binary trees}
\label{sec: non-binary trees}

To prove the result for non-binary trees, we use the fact that
any $[n]$-tree with no degree-two vertices can be resolved into a binary $[n]$-tree. We then apply our our construction for binary trees to a resolution of each tree and adapt this construction to
obtain an analogous result for the unresolved tree.

\begin{lemma}\label{lem: nonbinary}
Let $\mathcal{T}_1,\ldots, \mathcal{T}_r$ be 
phylogenetic $[n]$-trees with $n \geq 2r + 4$ and let $m_i$ be the number of edges of $\C T_i$ for $1 \leq i \leq r$.
There exist functionals $v = (v_1,\ldots,v_r)$ on the parameter spaces of $V_{\C T_1}^{G},\ldots,V_{\C T_r}^{G}$ such that $\dim \conv(D_i(v)) = (|G|-1)m_i$ for $1 \leq i \leq r$.
\end{lemma}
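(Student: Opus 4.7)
The plan is to reduce the non-binary case to the binary case by replacing each $\C T_i$ with a binary refinement. For each $1 \leq i \leq r$, fix a binary phylogenetic $[n]$-tree $\C T_i^\flat$ which contracts back to $\C T_i$; after identifying $\C E(\C T_i)$ with a subset of $\C E(\C T_i^\flat)$, the leaf sets coincide and the consistent leaf-labelings are in natural bijection. Let $\pi_i : \B R^{|G|(2n-3)} \to \B R^{|G|m_i}$ be the projection that forgets the coordinates of the edges contracted in forming $\C T_i$ from $\C T_i^\flat$. Writing $\C A_i^\flat$ and $\C A_i$ for the exponent sets of $V_{\C T_i^\flat}^G$ and $V_{\C T_i}^G$ respectively, one has $\pi_i(\C A_i^\flat) = \C A_i$, with the bijection on leaf-labelings preserved.

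Next, apply the binary case (Lemmas \ref{lem: construct v for any g} and \ref{lem: dim D general g} when $|G| > 2$, and Theorem \ref{thm: main CFN} when $|G| = 2$) to the tuple $\C T_1^\flat, \ldots, \C T_r^\flat$, obtaining functionals $v^\flat = (v_1^\flat, \ldots, v_r^\flat)$ for which $\dim \conv D_i(v^\flat) = (|G|-1)(2n-3)$ for every $i$. The critical observation is that each $v_i^\flat$ constructed in those proofs factors through the projection to leaf-edge coordinates: the separating hyperplanes only involve sums of the form $\sum_{\ell \in \C L(\C T_i^\flat)} x_g^\ell$. Since $\C T_i$ and $\C T_i^\flat$ share the same leaf edges, the same leaf-edge functional determines a valid $v_i$ on the parameter space of $V_{\C T_i}^G$, and evaluating $v_i$ on a vector of $\C A_i$ yields the same value as evaluating $v_i^\flat$ on the corresponding vector of $\C A_i^\flat$. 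Consequently the partitions match and $D_i(v) = \pi_i(D_i(v^\flat))$.

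Finally, deduce the dimension bound by pushing forward the explicit basis constructed in the binary case. Inspection of the proof of Lemma \ref{lem: dim D general g} (and of Theorem \ref{thm: main CFN}) reveals it does more than count dimensions: it produces each vector $e^E_g - e^E_0 \in \Span{D_i(v^\flat)}$ for every $E \in \C E(\C T_i^\flat)$ and every non-identity $g \in G$. Under $\pi_i$, these vectors are preserved for $E \in \C E(\C T_i)$ and vanish for the contracted edges, so $\Span{D_i(v)}$ contains $e^E_g - e^E_0$ for all $E \in \C E(\C T_i)$ and $g \neq 0$, spanning a subspace of dimension $(|G|-1)m_i$. Any actual $\pi_i(a^\flat) \in D_i(v)$ satisfies $\sum_g x_g^E = 1$ on each edge of $\C T_i$ and thus lies outside this subspace, giving $\dim \Span{D_i(v)} \geq (|G|-1)m_i + 1$, i.e., $\dim \conv D_i(v) \geq (|G|-1)m_i$. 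The matching upper bound follows from $\C A_i \subseteq \Delta_{|G|-1}^{m_i}$.

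The hard part is really just careful bookkeeping: one must verify that the binary-case functionals genuinely factor through leaf-edge coordinates (they do, by direct inspection of the hyperplanes in Lemma \ref{lem: construct v for any g} and the construction in Theorem \ref{thm: main CFN}), and that those proofs actually exhibit each basis vector $e^E_g - e^E_0$ inside $\Span{D_i(v^\flat)}$ rather than only computing the rank abstractly. With these two points confirmed, the projection $\pi_i$ transports the result cleanly, and the non-binary case comes essentially for free once the binary case is in hand.
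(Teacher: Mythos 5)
Your proposal follows essentially the same route as the paper: resolve each $\C T_i$ to a binary tree, observe that the separating functionals depend only on leaf-edge coordinates (so they descend to the contracted tree and $D_i(v)=\pi_i(D_i(v^\flat))$), and then push the binary-case dimension result through the projection. The only cosmetic difference is in the last step, where you transport the explicit spanning vectors $e^E_g-e^E_0$ while the paper uses the inequality $\dim\conv(p_i(S))\geq\dim\conv(S)-\dim\ker(p_i)$; both are valid and yield the same conclusion.
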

\begin{proof}
 Let $\C S_i$ be a binary phylogenetic $[n]$-tree obtained by resolving $\C T_i$. That is, $\C T_i$ is obtained after contracting $2n-3-m_i$ internal edges o $\C S_i$.

Again we will work with dehomogenized vectors.  For $1 \leq i \leq r$ let $\pi'_i: \B R^{|G|(2n-3)} \to \B R^{(|G|-1)(2n-3)}$ be the projection which forgets coordinates $x_0^E$ for identity element $0 \in G$ from the parameter space for $V_{\C S_i}^G$.  Let $\pi_i: \B R^{|G|m_i} \to \B R^{(|G|-1)m_i}$ be the equivalent projection for $V_{\C T_i}^G$.

Let $p_i: \B R^{(|G|-1)(2n-3)} \to \B R^{(|G|-1)m_i}$ be the projection from the reduced parameter space of $V_{\C S_i}^G$ to the reduced parameter space of $V_{\C T_i}^G$ which forgets the coordinates of the parameters for the contracted edges.  The kernel of $p_i$ has dimension $(|G|-1)(2n-3-m_i)$.  If $\C A'_i$ is the set of exponent vectors associated to the toric variety $V_{\C S_i}^G$ and $\C A_i$ the set of vectors corresponding to $V_{\C T_i}^G$, then $\pi_i(\C A_i) = p_i (\pi'_i(\C A'_i))$.

Choose functionals $v' = (v'_1,\ldots,v'_r) \in \prod_{i=1}^r (\mathbb{R}^{|G|(2n-3)})^*$ 
as in the proof of Theorem \ref{thm: main CFN} or Theorem \ref{thm: main nonCFN} (depending on whether $G = \B Z/2\B Z$).  As shown above, $\dim \conv(D_i(v')) = (|G|-1)(2n-3)$. Additionally $\dim \conv(\pi'_i(D_i(v'))) = (|G|-1)(2n-3)$ since $\pi'_i$ does not affect the dimension of subsets of $\C A'_i$. Let $v=(v_1,...,v_r)$ be in $\prod_{i=1}^r (\mathbb{R}^{|G|m_i})^*$ where $v_i$ is $v'_i$ after projecting away the $|G|(2n - 3 - m_i)$ entries corresponding to the contracted edges.  Each $v'_i$ depends only on the coordinates of the leaf edges of $\C S_i$, so each consistent labeleing has the same evaluation by $v'$ and $v$.  The minimum value is achieved at the same index $i$ so
 \[ \pi_i(D_i(v)) = p_i(\pi'_i(D_i(v'))). \]
 
We can conclude
\[\dim \conv(D_i(v)) \geq \dim \conv(\pi_i(D_i(v))) \]\[\geq \dim \conv(\pi'_i(D_i(v'))) - \dim \ker(p_i) = (|G|-1)m_i. \]
However $\dim \conv(D_i(v'))$ cannot exceed $(|G|-1)m_i$ because $D_i(v) \subseteq \C A_i$ and the affine span of $\C A_i$ has dimension $(|G|-1)m_i$.
\end{proof}
The above result allows us to extend Theorem \ref{thm: main nonCFN} to non-binary trees, and henceforth we can work more generally with arbitrary trees with no degree-2 vertices.

\subsection{Identifying Parameters}
\label{sec: Identifying Parameters}

The final generalization is to allow parameters to be identified according to a non-trivial subgroup $B$ of $\Aut(G)$. Recall that all group elements in the same orbit of $B$ are assigned the same parameter. As discussed in the introduction, many of the most commonly used models in phylogenetics, including the JC and K2P models, are models of this form.

\begin{lemma}\label{lem: nonbinary eqclass}
Let $\mathcal{T}_1,\ldots, \mathcal{T}_{r}$ be 
phylogenetic $[n]$-trees with $n \geq 2r + 4$ and let $m_i$ be the number of edges of $\C T_i$ for $1 \leq i \leq r$.  Let $G$ be a finite abelian group with $B$ a subgroup of $\Aut(G)$.
There exist functionals $v = (v_{1},\ldots,v_{r})$ on the parameter spaces of $V_{\C T_{1}}^{(G,B)},\ldots,V_{\C T_{r}}^{(G,B)}$ such that $\dim \conv(D_{i}(v)) = lm_{i}$ for $1 \leq i \leq r$.
\end{lemma}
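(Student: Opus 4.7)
My plan is to reduce to Lemma \ref{lem: nonbinary} via the natural linear quotient map from the full general-model parameter space to the parameter space after identifying $B$-orbits. For each $1 \leq i \leq r$, define the linear projection $q_i : \B R^{|G|m_i} \to \B R^{(l+1)m_i}$ that, block-by-block over edges, sums coordinates lying in the same $B$-orbit. If $\C A_i$ denotes the exponent set for $V^G_{\C T_i}$ and $\C A_i^B$ that for $V^{(G,B)}_{\C T_i}$, then by construction $q_i(\C A_i) = \C A_i^B$. In the case $G = \B Z/2\B Z$, we have $\Aut(G) = \{1\}$, so $B$ is necessarily trivial and the claim follows immediately from Lemma \ref{lem: nonbinary}; assume henceforth $|G| > 2$.

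First, I would apply Lemma \ref{lem: nonbinary} (with trivial parameter identification) to obtain functionals $v' = (v'_1,\ldots,v'_r)$ with $\dim \conv(D_i(v')) = (|G|-1)m_i$. Inspecting the constructions in Lemma \ref{lem: construct v for any g}, each $v'_i$ is of the form $w_i \circ \pi_i$ where $\pi_i$ projects onto the leaf-edge coordinates and $w_i$ is built from the quantities $\sum_{j \in [n]} \sum_{g \in G \setminus \{0\}} x_g^j$. Since every element of $B$ fixes the identity and permutes $G \setminus \{0\}$ within each orbit, each such functional is $B$-invariant and therefore descends through $q_i$: there is a unique linear functional $v_i$ on $\B R^{(l+1)m_i}$ with $v_i \circ q_i = v'_i$.

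From $v_i \circ q_i = v'_i$ and the fact that the columns of $A_i^B$ are the $q_i$-images of the columns of $A_i$, the partitions of consistent leaf labelings induced by $v = (v_1,\ldots,v_r)$ and $v'$ coincide. Hence $D_i(v) = q_i(D_i(v'))$, and the affine span of $D_i(v)$ is the $q_i$-image of the affine span of $D_i(v')$, which by Lemma \ref{lem: nonbinary} is the entire affine span of $\C A_i$, of dimension $(|G|-1)m_i$. A short calculation shows that $\ker q_i$ has dimension $(|G|-l-1)m_i$ and is contained in the linear translate of the affine span of $\C A_i$, since both consist of vectors whose coordinates sum to zero on each edge-block. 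Therefore $q_i$ drops dimension by exactly $(|G|-l-1)m_i$, giving $\dim \conv(D_i(v)) = lm_i$ as desired. The one subtle point is the $B$-invariance of the functionals coming out of Lemma \ref{lem: construct v for any g}; once this is in hand, the descent through $q_i$ and the resulting dimension count are routine linear algebra.
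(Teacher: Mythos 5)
Your proposal is correct and follows essentially the same route as the paper: reduce to the trivial-$B$ case of Lemma \ref{lem: nonbinary}, observe that the functionals there depend only on the total number of non-identity leaf labels and hence are $B$-invariant and descend through the orbit-summing projection, so that $D_i(v) = q_i(D_i(v'))$, and then count dimensions through the projection. The only cosmetic difference is at the end: the paper uses the generic inequality $\dim\conv(q_i(S)) \geq \dim\conv(S) - \dim\ker q_i$ together with the trivial upper bound $lm_i$ from $D_i(v) \subseteq \C A_i^B$, whereas you obtain equality directly by noting $\ker q_i$ lies in the direction space of the affine span; both are fine.
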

\begin{proof}
Lemma \ref{lem: nonbinary} proves the case that $B$ is trivial.  Assume that $B$ is non-trivial in which case $|G| > 2$.

For $1 \leq i \leq r$ let $p_i: \B R^{|G|m_{i}} \to \B R^{(l+1)m_i}$ be the projection from the parameter space of $V_{\C T_i}^{(G,1)}$ to the parameter space of $V_{\C T_i}^{(G,B)}$ by summing the coordinates of the parameters that are identified in $B$ for each edge.  The kernel of $p_i$ has dimension $(|G|-l-1)m_i$.  If $\C A'_i$ is set of exponent vectors associated to toric variety $V_{\C T_i}^{(G,1)}$ and $\C A_i$ the vectors to $V_{\C T_i}^{(G,B)}$ then $\C A_i = p_i(\C A'_i)$.

Choose functionals $v' = (v'_1,\ldots,v'_r)$ with $v'_i$ acting on $\B R^{|G|m_i}$ as in the proof of Lemma \ref{lem: nonbinary}.  As was shown above, $D_i(v')$ has affine dimension $(|G|-1)m_i$.  Because we are in the case $|G| > 2$, each $v'_i$ depends only on the total number of non-identity leaf labels, which does not change when parameters are identified according to $B$.  Therefore there is a functional $v_i$ on $\B R^{(l+1)m_i}$ such that $v'_i = v_i \circ p_i$.  Let $v = (v_1,\ldots,v_r)$.  Each constistent labeleing has the same evaluation by $v'$ and $v$ so the minimum value is achieved at the same index $i$.  Then
 \[ D_i(v) = p_i(D_i(v')). \]
The dimension of $D_i(v)$ has the bound
 \[ \dim \conv(D_i(v)) \geq \dim \conv(D_i(v')) - \dim \ker(p_i) = lm_i. \]
Since $lm_i$ is the dimension of the affine span of $\C A_i$, it is also an upper bound on $\dim \conv(D_i(v))$.
\end{proof}

Applying Draisma's Lemma to Lemma \ref{lem: nonbinary eqclass} shows that for $\C T_1,\ldots,\C T_r$ phylogenetic $[n]$-trees (not necessarily binary) with $n \geq 2r+5$, $G$ an abelian group, and $B$ a subgroup of $\Aut(G)$, the mixture model $V_{\C T_1}^{(G,B)}*\cdots *V_{\C T_r}^{(G,B)}$ has projective dimension $lM + r - 1$ where $M$ is the sum of the number of edges among $\C T_1,\ldots,\C T_r$ and $l+1$ is the number of orbits of $B$ in $G$.  This is the expected dimension, so we have completed the proof of Theorem \ref{thm: main} in full generality.

\section{Improved bounds for special cases}.
\label{sec: special cases}
Our proof of Theorem \ref{thm: main} holds when the number $r$ of phylogenetic tree models in the mixture is not too large compared to number $n$ of leaves of the trees, according to bound $n \geq 2r + 5$.  It should be noted though that the bound $n \geq 2r + 5$ merely reflects the limitations in our proof techniques. In our experiments we have not come across any defective mixtures of phylogenetic tree models, and we have no reason to believe that these models have defective join dimensions for larger $r$, so we state the following conjecture.

\begin{conj}\label{conj: main}
Let $\mathcal{T}_1,\ldots, \mathcal{T}_r$ be 
phylogenetic $[n]$-trees with $n \geq 3$, and let $G$ be an abelian group and $B\subset \Aut(G)$.  Then $V^{(G,B)}_{\C T_1} * \cdots * V^{(G,B)}_{\C T_r}$ has the expected dimension.
\end{conj}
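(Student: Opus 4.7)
The plan is to attack the conjecture by induction on $r$, using Theorem \ref{thm: main} to handle $n \geq 2r+5$ and confronting the remaining finite list of pairs $(n,r)$ with $3 \leq n < 2r+5$ for each fixed $r$ through a combination of sharpened combinatorics and classical Terracini-style tangent space arguments. As a preliminary step, I would verify the conjecture computationally in Macaulay2 for small $r$ (say $r \leq 5$), extending the computational results alluded to in Section \ref{sec: special cases}, both as supporting evidence and as a check on the inductive step. The cases $r=1$ and $r=2$ for small $n$ can largely be settled this way, leaving the genuinely structural work for the step $r \to r+1$.

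For that inductive step I would split into two regimes. In the \emph{unsaturated regime}, where the expected dimension is strictly less than the ambient projective dimension $|G|^{n-1}-1$, the bottleneck in the present paper is that each $D_i(v)$ is forced into a single slice indexed by the total number of non-identity leaf labels, of which only about $n/2$ are available. I would replace this one-dimensional statistic by a multi-dimensional one recording how non-identity leaves are distributed among a fixed edge-disjoint subtree decomposition of each $\C T_i$. Proposition \ref{prop:hyperplanes} guarantees that any non-crossing hyperplane arrangement in parameter space is realized by some $v$, so a sufficiently rich statistic should allow each $D_i(v)$ to attain full affine dimension $lm_i$ for much smaller $n$ than the current proof requires.

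In the \emph{saturated regime}, where the expected dimension equals the ambient projective dimension, Theorem \ref{tropjoin} is inherently too weak: the tropicalization of a join that fills its ambient space can have strictly smaller dimension than the classical join. Here I would fall back on classical Terracini's Lemma and show directly that at generic points $p_1,\ldots,p_r$ the tangent spaces $T_{p_i} V^{(G,B)}_{\C T_i}$ span the full ambient space. The toric structure makes each tangent space explicit in terms of the lattice polytope $P^{(G,B)}_{\C T_i}$, and the $G$-action on Fourier coordinates permits degenerations toward configurations near torus-fixed points whose tangent spaces can be read off combinatorially from the rays of $P^{(G,B)}_{\C T_i}$ at a chosen vertex.

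The main obstacle will be the saturated regime. Draisma's lower bound is effective precisely when the join is far from filling its ambient space, so the saturated regime arises exactly where the tropical technique becomes slack. Executing Terracini's argument uniformly requires showing that the tangent spaces at appropriately chosen degenerations of generic points remain in sufficiently general position, and this combinatorial fact about Fourier-coordinate monomials seems to require a genuinely new idea. Specialization-and-perturbation strategies that have succeeded for secants of Segre and Segre--Veronese varieties suggest an analog here, but making one work will likely require identifying a toric degeneration of each $V^{(G,B)}_{\C T_i}$ that is compatible with the tree structure and compatible across all $r$ trees simultaneously.
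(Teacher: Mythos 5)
The statement you are addressing is Conjecture \ref{conj: main}, which the paper deliberately leaves open: it proves only the range $n \geq 2r+5$ (Theorem \ref{thm: main}), the improved bound $n \geq 2r+1$ for claw trees (Proposition \ref{prop:claw}), and a handful of small cases by computation, and the Discussion explicitly states that different techniques are expected to be needed for $n < 2r+5$. Your submission is a research program rather than a proof, and its two load-bearing steps are not carried out. In the unsaturated regime, the entire difficulty is precisely the assertion you leave as a hope --- that ``a sufficiently rich statistic should allow each $D_i(v)$ to attain full affine dimension $lm_i$.'' No such statistic is constructed, no hyperplane arrangement is exhibited, and no rank computation is performed; this is exactly the content that Lemmas \ref{lem: CFN slice}, \ref{lem: construct v for any g}, and \ref{lem: dim D general g} supply in the regime the paper does handle, and there is no evidence that a finer slicing succeeds when the number of available slices (roughly $n/2$) falls below $r$. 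In the saturated regime you concede outright that a ``genuinely new idea'' is required, which is an admission that the argument is incomplete rather than a proof of anything.

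Two further structural problems: first, the proposed ``induction on $r$'' has no inductive mechanism --- nondefectiveness of an $r$-fold join does not formally imply anything about an $(r+1)$-fold join, and you never state what the inductive hypothesis contributes to the step $r \to r+1$, so the induction collapses to ``handle every $(n,r)$ with $n < 2r+5$ separately,'' which is an infinite family of genuinely open cases as $r$ grows (computation can only certify finitely many of them). Second, your claim that Theorem \ref{tropjoin} is ``inherently too weak'' in the saturated regime is an overstatement: since the $W_i(v)$ partition $[N]$, the bound $\sum_i \rank D_i(v) - 1$ can equal the ambient dimension $N-1$ whenever each $D_i(v)$ is a linearly independent set, so the tropical method is not a priori disqualified there --- the correct statement is only that Draisma's lower bound need not be tight, which does not by itself force a retreat to classical Terracini. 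In short, nothing in the proposal closes the gap the paper identifies, and the conjecture remains open.
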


\subsection{Claw trees}

In some special cases we can improve the bound.  For instance, Theorem \ref{thm: main nonCFN} states that when the group $G$ has order at least 3, then joins have the expected dimension for $n \geq 2r+4$. When each tree in the mixture is the $n$-leaf claw tree, we can improve this bound.

\begin{prop}\label{prop:claw}
 Let $\mathcal{T}$ be the $[n]$-leaf claw tree and let $G$ be an abelian group and $B\subset \Aut(G)$.  Then for $n \geq 2r + 1$, the $r$th secant $\sigma_r(V^{(G,B)}_{\C T})$ has the expected dimension.
\end{prop}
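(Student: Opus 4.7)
The plan is to apply Corollary \ref{tropsec} by constructing functionals $v = (v_1, \ldots, v_r)$ so that each $D_i(v)$ has linear rank $ln+1$, making the tropical lower bound match the expected dimension $r(ln+1) - 1$. Because the $n$-leaf claw tree $\C T$ has $m = n$ edges (all leaf edges), the variety $V^{(G,B)}_{\C T}$ has projective dimension $ln$, where $l+1$ is the number of $B$-orbits in $G$. Following the blueprint of Theorems \ref{thm: main CFN} and \ref{thm: main nonCFN}, I will partition consistent leaf-labelings by the number $c$ of leaves labeled by non-identity (i.e., non-trivial $B$-orbit) elements, writing $\C S_c$ for the corresponding set of exponent vectors, and choose each $v_i$ so that $R_i(v)$ collects vectors in a prescribed range of $c$.

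The principal technical step is a slice lemma: for the claw tree and $2 \leq c \leq n-1$ (with $c$ even when $|G|=2$), the affine span of $\C S_c$ has dimension exactly $ln-1$, which is the full codimension-one hyperplane $\sum_{\ell,[g]\neq[0]} x^{\ell}_{[g]} = c$. The claw tree is especially tractable here because every edge is a leaf edge, so there are no internal paths to navigate. Taking differences of labelings that agree except at two leaves immediately yields vectors of the form $e^{\ell}_{[g]} - e^{\ell'}_{[g]}$ (label exchange between two leaves), and for $|G|>2$ one also obtains $e^{\ell}_{[g]} - e^{\ell}_{[g']}$ by changing the label at one leaf and compensating at another non-identity leaf; together these span the entire $(ln-1)$-dimensional linear hyperplane parallel to the slice.

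With the slice lemma in hand, the regions are defined via hyperplane separators on the sum $\sum_{\ell,[g]\neq[0]} x^{\ell}_{[g]}$, produced by Proposition \ref{prop:hyperplanes}. For $|G|>2$ take
\[
R_1 \supseteq \C S_0 \cup \C S_2, \qquad R_i \supseteq \C S_{2i-1} \cup \C S_{2i} \text{ for } 2 \leq i \leq r,
\]
so each region unites two slices in parallel hyperplanes, boosting the affine dimension from $ln-1$ up to $ln$; any higher-$c$ vectors that happen to fall into $R_r$ only add further vectors without lowering its rank. For $|G|=2$ odd slices are empty, so instead follow the borrow-and-remove strategy of Theorem \ref{thm: main CFN}: set
\[
R_i \supseteq (\C S_{2i} \setminus \{p_i\}) \cup \{p_{i-1}\} \text{ for } 1 \leq i \leq r-1, \qquad R_r \supseteq \C S_{2r} \cup \{p_{r-1}\},
\]
where $p_0$ is the all-identity labeling and $p_i \in \C S_{2i}$ for $i \geq 1$; in each region the borrowed vector breaks the slice constraint and lifts the affine dimension to $ln$.

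The main obstacle is a delicate edge case in the CFN setting: for $n = 2r+1$, the top slice $\C S_{2r}$ consists of $\binom{2r+1}{2r} = n$ vectors, which are the vertices of a simplex, so removing any one vector would drop the affine dimension by two rather than zero. This forces $R_r$ to be defined without excision, and correspondingly only $r-1$ hyperplane separators are used so that $R_r$ is the topmost region. The bound $n \geq 2r+1$ is precisely the requirement that $2r \leq n-1$, needed both for the slice lemma to apply at $c = 2r$ and (in the $|G|=2$ case) so that $\C S_{2r}$ does not collapse to the single all-ones vector. Finally, the extension to non-trivial $B$ is handled by the projection argument of Lemma \ref{lem: nonbinary eqclass}: since the $v_i$ depend only on the orbit-invariant count of non-identity leaf labels, the construction descends cleanly to the quotient parameter space.
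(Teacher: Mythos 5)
Your overall architecture matches the paper's: Draisma's lemma, slicing the consistent labelings by the number of non-identity leaf labels, the borrow-and-excise device for CFN, and the orbit projection for non-trivial $B$. The CFN half is essentially the paper's argument and is correct (the drop in affine dimension when you delete a vertex of the simplex $\C S_{2r}$ is one, not two, but your conclusion that $R_r$ must not excise anything stands, and it holds automatically because only $r-1$ hyperplanes are used, so $R_r$ is the unexcised top region).

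The $|G|>2$ half has a genuine gap: the claimed slice lemma is false, and with it the claim that $R_1=\C S_0\cup\C S_2$ reaches rank $ln+1$. Take $G=\B Z/3\B Z$ with $B$ trivial, so $l=2$. Every labeling in $\C S_2$ has the form $e^\ell_g+e^{\ell'}_{-g}$, so the nonzero functional $\sum_\ell (x^\ell_1-x^\ell_2)$ vanishes on all of $\C S_2$ and on $\C S_0$; hence $\C S_2$ has affine dimension at most $2n-2<ln-1$ and $\C S_0\cup\C S_2$ at most $2n-1<ln$, short of what you need. The same degeneracy occurs whenever $G$ has an element with $g\neq -g$ whose $B$-orbit does not contain $-g$. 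The root of the problem is that ``changing the label at one leaf and compensating at another non-identity leaf'' produces $e^\ell_g-e^\ell_{g'}+e^{\ell''}_h-e^{\ell''}_{h'}$, not $e^\ell_g-e^\ell_{g'}$; the single-coordinate change cannot in general be isolated using vectors from one slice alone. The paper sidesteps this by never asserting a per-slice dimension count for $|G|>2$: it places $\C S_0\cup\C S_2\cup\C S_3$ in the first region and $\C S_{2i}\cup\C S_{2i+1}$ in the $i$th, and proves full rank of each region directly via the telescoping vectors $v^1_m$ and the $a+b+c=0$ identity of Lemma \ref{lem: dim D general g}, a computation that crucially combines vectors from the even and the odd slice. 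Your construction can be repaired by adopting that pairing (in particular adding $\C S_3$ to $R_1$) and replacing the slice lemma by the cross-slice rank argument; the bound $n\geq 2r+1$ is then exactly what is needed for labelings with $2r+1$ non-identity leaves to exist on the claw tree.
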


\begin{proof}The proof outline follows that of Theorems \ref{thm: main CFN} and \ref{thm: main nonCFN}, but with simplifications that allow for the improved bound $n \geq 2r+1$.

For $G = \B Z/2\B Z$, as in Lemma \ref{lem: CFN slice} let $\C S_c$ the hyperplane defined by
\[ \sum_{E \in \C L(\C T)} x^E_1 = c. \]
We show that for even $2 \leq c \leq n-1$, $\rank\Span{\C A \cap \C S_c}= n$.  For any pair of edges $E_1,E_2$, let $L$ be any collection of $c-1$ edges not containing $E_1$ or $E_2$ and $F = \sum_{j\in L} e^j_1$.  Then in $\Span{\C A \cap \C S_c}$ is
 \[ (e^{E_1}_1 + F) - (e^{E_2}_1 + F) = e^{E_1}_1 - e^{E_2}_1. \]
Fixing edge $E$, now let $L$ be any set of $c-1$ edges not containing $E$ and again $F = \sum_{j\in L} e^j_1$.  The vector
 \[ (e^E_1 + F) + \sum_{j \in L} (e^E_1 - e^j_1) = ce^E_1 \]
is in $\Span{\C A \cap \C S_c}$ for every edge $E$ so $\rank\Span{\C A \cap \C S_c}=n$.

The remainder of the proof exactly follows the proof of Theorem \ref{thm: main CFN} replacing dimension $2n-3$ (the number of edges of a binary $[n]$-tree) with $n$ (the number of edges of the claw $[n]$-tree).

For $|G| > 2$, as in Lemma \ref{lem: construct v for any g} choose functionals $v = (v_1,\ldots,v_r)$ that divide the vectors corresponding to consistent leaf-labelings so that $D_i(v)$ contains the leaf-labelings with $2i$ or $2i+1$ non-identity edges for $1 \leq i \leq r$.

Let $\pi: \B R^{|G|n} \to \B R^{(|G|-1)n}$ be the dehomogenization map that forgets the coordinates of $x_0^E$ for each edge $E$.  Working in the dehomogenized coordinates, fix $i$ and any edge $E_1$.  Let $L$ be any collection of $2i-2$ edges not containing $E_1$ and $F = \sum_{j\in L} e^j_1$.  Choose $E_2$ and $E_3$ to be additional edges not in $L$.  Follow the argument in the proof of Lemma \ref{lem: dim D general g} but replacing each $f^j_h$ with $e^{E_j}_h$.  This shows that $e^{E_1}_g \in \Span{\pi(D_i(v))}$ for any non-identity $g \in G$, and consequently that
 \[ \dim \Span{D_i(v)} = (|G|-1)n + 1. \]
Applying Draisma's Lemma, this proves the result for $B$ trivial.  The argument in the proof of Lemma \ref{lem: nonbinary eqclass} can be applied here for the case that $B$ is non-trivial.
\end{proof}

\subsection{Trees with few leaves}

For a specific value of $r$ and a specific model $M = (G,B)$, there are a finite number of collections of $[n]$-trees $\C T_1,\ldots,\C T_r$ with $n < 2r + 5$, so one can check whether all joins have the expected dimension by explicit computation.  We perform some of these computations in the computer algebra system {\tt Macaulay2} \cite{grayson2002macaulay} using the package {\tt PhylogeneticTrees} \cite{banos2016phylogenetic}.

One can efficiently compute the dimensions of joins of parametrized varieties using the principle of Terracini's Lemma.  The dimension of a join is equal to the dimension of the tangent space at a generic point on the join variety.  Choosing random parameter values, we obtain the tangent space dimension from the rank of the Jacobian of the paramtrization map.  To further improve efficiency, we compute the rank over a finite field $\B F_p$ for a large prime $p$.

Note that this algorithm is probabilistic.  With small probability the random parameter values may be non-generic, leading to a drop in the dimension of the tangent space.  Additionally, for some parameter values there may be more linear dependencies in the Jacobian over $\B F_p$ than over characteristic zero.  Both situations produce a lower value than the true dimension, so this algorithm only certifies a lower bound.  However, if the algorithm returns a value equal to the expected dimension, it is a proof that the join is not defective.  

Here we state the results that we are able to obtain by combining Theorem \ref{thm: main} and computational results obtained for small $n$.

\begin{prop}\label{prop:computational1}
Let $\mathcal{T}$ be a
phylogenetic $[n]$-tree with $n \geq 3$.  Then the second secant $V^{M}_{\C T} * V^{M}_{\C T}$ has the expected dimension for $M$ equal to $\CFN$, $\JC$, $\KwP$ or $\KtP$.
\end{prop}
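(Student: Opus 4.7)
The plan is to combine Theorem \ref{thm: main} for large $n$ with an exhaustive computer-algebra check for small $n$. Specializing Theorem \ref{thm: main} to $r = 2$ and $\mathcal{T}_1 = \mathcal{T}_2 = \mathcal{T}$ immediately gives the expected dimension of $V^{M}_{\C T} * V^{M}_{\C T}$ for every $n \geq 9$ and every group-based model $M = (G, B)$. All four of $\CFN$, $\JC$, $\KwP$, $\KtP$ are of this form, so it remains only to verify the claim in the finite range $3 \leq n \leq 8$.

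For each such $n$, the collection of phylogenetic $[n]$-trees (with no degree-two vertices) up to isomorphism is finite and small. I would loop over a representative from each isomorphism class and, for each of the four models, apply the probabilistic Terracini-type algorithm described earlier in this section: choose random Fourier parameter values for the two copies of $V^{M}_{\C T}$ together with a random mixing coefficient, then compute the rank of the Jacobian of the join parameterization modulo a large prime $p$ using the Macaulay2 package PhylogeneticTrees. As noted above, this rank is a certified lower bound on $\dim(V^{M}_{\C T} * V^{M}_{\C T})$, so any match with the expected dimension certifies nondefectiveness for that triple $(n, \mathcal{T}, M)$.

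The main obstacle will be the size of the largest case: for $n = 8$ with $\KtP$, the ambient projective space has dimension $4^{7} - 1 = 16383$, so the Jacobian has many columns, even though its rank is bounded above by the expected dimension $2(|G|-1)(2n-3) + 1 = 79$. The rank computation itself is cheap over $\mathbb{F}_p$; the bottleneck is assembling and storing the Jacobian. To guard against spurious drops caused either by hitting a non-generic random point or by an unlucky reduction mod $p$, I would rerun each test with several independent random seeds and, if needed, several primes. Once every triple $(n, \mathcal{T}, M)$ in the range $n \leq 8$ has been certified, combining these finitely many checks with Theorem \ref{thm: main} yields the statement for all $n \geq 3$.
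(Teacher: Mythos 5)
Your proposal is correct and follows essentially the same route as the paper: Theorem \ref{thm: main} handles $n \geq 9$, and a probabilistic Jacobian-rank computation over $\mathbb{F}_p$ (one representative per isomorphism class of trees) certifies the finitely many remaining cases. The only difference is that the paper also invokes the sharper bound $n \geq 2r+4$ for $|G|>2$ (Theorem \ref{thm: main nonCFN} and its extensions) to dispose of $n=8$ for $\JC$, $\KwP$, and $\KtP$ theoretically, so its explicit computations stop at $n \leq 7$ for those models and only go up to $n=8$ for $\CFN$.
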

This statement was previously proved for models $\JC$ and $\KwP$ in \cite{Allman}. The case of $n \geq 9$ in Proposition \ref{prop:computational1} is proved by Theorem \ref{thm: main}, and $n \geq 8$ for $M \neq \CFN$ by Theorem \ref{thm: main nonCFN}.  For each model $M$ we check the dimension of $V^{M}_{\C T} * V^{M}_{\C T}$ for all $[n]$-trees $\C T$ with $n \leq 7$ and $V^{\B Z/2\B Z}_{\C T} * V^{\B Z/2\B Z}_{\C T}$ for $[8]$-trees $\C T$.  The dimension of the secant is invariant under relabeling the leaves of $\C T$, so we need only check one tree for each equivalence class.  The equivalence classes correspond to the set of unlabeled trees with $n$ leaves.

\begin{prop}\label{prop:computational2}
Let $\mathcal{T}$ be a
phylogenetic $[n]$-tree with $n \geq 3$.  Then third secant $V^{M}_{\C T} * V^{M}_{\C T} * V^{M}_{\C T}$ has the expected dimension for $M$ equal to $\CFN$, and for $M$ equal to $\JC$, $\KwP$ or $\KtP$ with $n \neq 9$.
\end{prop}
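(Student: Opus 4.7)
The plan is to combine the theoretical bounds already established in the paper with a finite computation in the remaining small-leaf cases, exactly parallel to how Proposition \ref{prop:computational1} was treated.

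First, I would invoke Theorem \ref{thm: main} with $r=3$, which immediately settles all four models once $n \geq 2r+5 = 11$. For $M \in \{\JC, \KwP, \KtP\}$ we have $|G| = 4 > 2$, and the extra leaf in the bound $n \geq 2r+5$ of Theorem \ref{thm: main} comes only from the CFN step (Theorem \ref{thm: main CFN}). Tracing the same chain of arguments through Theorem \ref{thm: main nonCFN} together with Lemmas \ref{lem: nonbinary} and \ref{lem: nonbinary eqclass} gives the sharper bound $n \geq 2r+4 = 10$ for these three models. This reduces the statement to the finitely many cases $n \in \{3,4,5,6,7,8,9,10\}$ for $\CFN$ and $n \in \{3,4,5,6,7,8\}$ for $\JC$, $\KwP$, $\KtP$ (the case $n = 9$ being excluded in the statement, and $n = 10$ for these three models being covered already by the sharpened theorem).

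Next, I would carry out the probabilistic Terracini computation described in the paragraph preceding Proposition \ref{prop:computational1}. For each pair $(M,n)$ in the residual list and each unlabeled $[n]$-tree topology $\C T$, I would sample random Fourier parameter values over a large finite field $\B F_p$, form the Jacobian of the mixture parameterization $\hat h_{\C T, \C T, \C T}$ at those values, and compute its rank. If the rank equals the expected projective dimension $\min\{3(\dim V^M_{\C T}+1)-1, |G|^{n-1}-1\}$, then Terracini's Lemma certifies that the third secant achieves its expected dimension. Since the dimension is invariant under relabeling of leaves, it suffices to run the computation once per unlabeled binary or non-binary tree shape, and the list of such shapes is short for $n \leq 10$. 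This is performed in \texttt{Macaulay2} via the \texttt{PhylogeneticTrees} package; the proof is completed by reporting that every such certificate succeeds.

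The main obstacle is purely computational. The ambient space has dimension $|G|^{n-1}-1$, so for the Kimura and Jukes--Cantor models at $n = 9$ the Jacobian already lives in ambient projective dimension $4^8 - 1 = 65535$; this is precisely why the statement excludes $n = 9$ for those three models, since the probabilistic rank computation yields only a lower bound, and failure to reach the expected dimension leaves the question open rather than settled. For $\CFN$ the ambient dimensions $2^{n-1}-1$ remain manageable throughout $n \leq 10$, so no case needs to be excluded there. Thus the difficulty is not structural but logistic: choosing tree representatives, working modulo a good prime, and confirming that the Jacobian rank meets the expected value on every representative.
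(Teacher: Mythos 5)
Your proposal is correct and follows essentially the same route as the paper: Theorem \ref{thm: main} handles $n \geq 11$, Theorem \ref{thm: main nonCFN} (via the extensions in Section \ref{sec: non-binary trees and other group-based models}) handles $n \geq 10$ for the three $|G|=4$ models, and the remaining finitely many cases are certified by the probabilistic Jacobian-rank computation over $\B F_p$, run once per unlabeled tree shape, with $n=9$ for $\JC$, $\KwP$, $\KtP$ excluded precisely because that computation could not be completed. Your accounting of which cases remain and why the exclusion appears matches the paper's own justification.
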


For Proposition \ref{prop:computational2}, the case of $n \geq 11$ is proved by Theorem \ref{thm: main}, and $n \geq 10$ for $M \neq \CFN$ by Theorem \ref{thm: main nonCFN}.  For $M = \CFN$ we check the dimension for all $[n]$-trees $\C T$ with $n \leq 10$.  For $M$ equal to $\JC$, $\KwP$ or $\KtP$ we were able to check the dimension for all $[n]$-trees $\C T$ with $n \leq 8$, but were not able to complete the computation on $[9]$-trees.

\begin{prop}\label{prop:computational3}
Let $\mathcal{T}_1,\mathcal{T}_2$ be 
phylogenetic $[n]$-trees with $n \geq 3$.  Then $V^{M}_{\C T_1} * V^{M}_{\C T_2}$ has the expected dimension for $M$ equal to $\CFN$, and for $M$ equal to $\JC$, $\KwP$ or $\KtP$ with $n \neq 7$.
\end{prop}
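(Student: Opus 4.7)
The plan is to mirror the structure of the arguments for Propositions \ref{prop:computational1} and \ref{prop:computational2}: first use our theoretical results to handle all sufficiently large $n$, then reduce to a finite list of cases that can be verified computationally by the probabilistic Jacobian rank test already described.

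For the theoretical portion with $r = 2$, Theorem \ref{thm: main CFN} directly handles $M = \CFN$ whenever $n \geq 2r+5 = 9$. For $M \in \{\JC, \KwP, \KtP\}$, the underlying group has order $4 > 2$, so combining Theorem \ref{thm: main nonCFN} with Lemma \ref{lem: nonbinary eqclass} (which transfers the construction to models with identified parameters) and Lemma \ref{lem: nonbinary} (which handles non-binary trees) gives nondefectiveness for all $n \geq 2r+4 = 8$. This leaves only $3 \leq n \leq 8$ to check for $\CFN$ and $3 \leq n \leq 7$ for the other three models.

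For each remaining $(M,n)$, I would enumerate representatives of equivalence classes of ordered pairs $(\C T_1, \C T_2)$ of phylogenetic $[n]$-trees under simultaneous leaf relabeling, and for each such pair compute the dimension of $V_{\C T_1}^M * V_{\C T_2}^M$ by the tangent-space method: choose random Fourier parameters over a large finite field $\B F_p$, form the Jacobian of the parameterization of the join at that point, and compute its rank. Since this rank is always a lower bound on the true dimension, obtaining the expected dimension certifies nondefectiveness. This is the same routine used in the previous two propositions, implemented in \texttt{Macaulay2} using the \texttt{PhylogeneticTrees} package.

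The main obstacle is the case $n = 7$ for $M \in \{\JC, \KwP, \KtP\}$. The ambient space has dimension $4^{6} = 4096$, the parameter count per tree grows with $|G|$ and $|\C E(\C T)|$, and the number of inequivalent ordered pairs of $[7]$-trees is large, so the Jacobian rank computations over $\B F_p$ did not complete in reasonable time. This is precisely the reason $n = 7$ is excluded from the statement for those three models; for all other values of $n$ in the exceptional range the computation terminates and returns the expected dimension, yielding the proposition.
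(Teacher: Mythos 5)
Your proposal matches the paper's proof essentially exactly: the theoretical results cover $n\geq 9$ for $\CFN$ and $n\geq 8$ for the other models, and the remaining small cases are settled by the probabilistic Jacobian-rank computation over $\B F_p$, with $n=7$ excluded for $\JC$, $\KwP$, $\KtP$ precisely because that computation did not terminate (the paper in fact only completed pairs with $n\leq 6$ for those models). One minor citation point: since the proposition allows non-binary trees, the $\CFN$ case for $n\geq 9$ should invoke Theorem \ref{thm: main} (or Theorem \ref{thm: main CFN} together with Lemma \ref{lem: nonbinary}) rather than Theorem \ref{thm: main CFN} alone, which is stated only for binary trees.
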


For Proposition \ref{prop:computational3}, the case of $n \geq 9$ is proved by Theorem \ref{thm: main}, and $n \geq 8$ for $M \neq \CFN$ by Theorem \ref{thm: main nonCFN}.  For $M = \CFN$ we check the dimension of $V^{M}_{\C T_1} * V^{M}_{\C T_2}$ for all pairs $(\C T_1,\C T_2)$ of $[n]$-trees with $n \leq 8$.  Again, the dimension of the join is invariant under permutations of $[n]$, but a relabeling applies to both trees in the pair $(\C T_1,\C T_2)$.  Therefore we can let $\C T_1$ vary over the set of unlabeled trees, but $\C T_2$ must then be chosen from the full set of labeled $[n]$-trees.  For $M$ equal to $\JC$, $\KwP$ or $\KtP$ we were able to check the dimension for all pairs of $[n]$-trees with $n \leq 6$, but were not able to complete the computation on pairs of $[7]$-trees.

\begin{rmk} Note that while we resolve most of the cases for binary trees with few leaves in this section for 3-tree secants and 2-tree joins with respect to the $\JC$, $\KwP$, and $\KtP$ models, the $n=8$ case for 3-tree secants and the $n=7$ case for 2-tree joins remains an open computational question.
\end{rmk}

\section{Discussion}

In this paper, we show that, for $n \geq 2r+5$, the join varieties associated to a large class of group-based models, including the CFN, JC, K2P, and K3P models have the expected dimension.  In order to provide a complete answer to Conjecture \ref{conj: main}, we expect different proof techniques would need to be used to handle the $n < 2r+5$ case.  However, we showed how this bound could be improved for the case of claw trees in the proof of Proposition \ref{prop:claw}.

Not only do mixtures of group-based models give rise to a class of join varieties of toric varieties that are interesting to study in their own right, but the the dimension results in this paper have important statistical applications.  In particular, the dimension of these varieties play a key role in establishing identifiabiilty \cite{Allman, long2017identifiability}, which we discussed in the Introduction, but now define formally here.

\begin{defn} The tree parameters of the $r$-tree mixture model are \emph{generically identifiable} if, for any binary trees $\mathcal T_1$, \ldots, $\mathcal T_r$ on the same set of taxa, and generic choices of $\theta_1$, \ldots, $\theta_r$, $\omega$, the equality
$\psi_{\mathcal T_1 , \ldots, \mathcal T_r} ( \theta_1 , \ldots, \theta_r , \omega ) = \psi_{\mathcal T_1' , \ldots, \mathcal T_r'} ( \theta_1' , \ldots, \theta_r' , \omega' )$
implies $\{\mathcal T_1, \ldots, \mathcal T_r\} = \{\mathcal T_1', \ldots, \mathcal T_r'\}$.  
\end{defn}

In \cite{Allman}, the authors show that the tree parameters of Jukes-Cantor and Kimura 2-parameter 2-tree mixtures are generically identifiable
using phylogenetic invariants and knowledge of the dimension of the join varieties.  However, the question of whether the tree parameters of Kimura 3-parameter 2-tree mixtures are generically identifiable remains open. Likewise, identifiability for mixtures with more than 2 trees remains open. Resolving the Conjecture \ref{conj: main} would be an important step towards these results.

\section{Acknowledgements} This work began at the 2016 AMS Mathematics Research Community on ``Algebraic Statistics," which was supported by the National Science Foundation under grant number DMS-1321794. RD was supported by NSF DMS-1401591. EG was supported by NSF DMS-1620109.
RW was supported by a NSF GRF under grant number PGF-031543,   NSF RTG grant 0943832, and a Ford Foundation Dissertation Fellowship. HB was supported in part by a research assistantship, funded by
the National Institutes of Health grant R01 GM117590. PEH
was partially supported by NSF grant DMS-1620202.

\newpage

\bibliography{references}
\bibliographystyle{plain}

\end{document}